\newtheorem{theorem}{Theorem}[section]
\newtheorem*{theorem*}{Theorem}
\newtheorem*{proposition*}{Proposition}
\newtheorem{lemma}[theorem]{Lemma}
\newtheorem*{lemma*}{Lemma}
\newtheorem{corollary}[theorem]{Corollary}
\newtheorem*{conjecture*}{Conjecture}
\newtheorem{fact}[theorem]{Fact}
\newtheorem*{fact*}{Fact}
\newtheorem*{hypothesis*}{Hypothesis}
\theoremstyle{definition}
\newtheorem{definition}[theorem]{Definition}
\newtheorem*{definition*}{Definition}
\newtheorem{algorithm}[theorem]{Algorithm}
\theoremstyle{remark}
\newtheorem*{claim*}{Claim}
\newtheorem{remark}[theorem]{Remark}
\newtheorem*{remark*}{Remark}
\newtheorem*{observation*}{Observation}
\crefname{lemma}{Lemma}{Lemmas}
\crefname{fact}{Fact}{Facts}
\crefname{theorem}{Theorem}{Theorems}
\crefname{corollary}{Corollary}{Corollaries}
\crefname{claim}{Claim}{Claims}
\crefname{example}{Example}{Examples}
\crefname{algorithm}{Algorithm}{Algorithms}
\crefname{problem}{Problem}{Problems}
\crefname{definition}{Definition}{Definitions}
\newcommand{\pmn}[1]{\left\|#1\right\|_{\infty \to 1}}
\newcommand{\Authornotecolored}[3]{}
\newcommand{\Authorcomment}[2]{}
\newcommand{\Authorfnote}[2]{}
\definecolor{forestgreen(traditional)}{rgb}{0.0, 0.27, 0.13}
\newcommand{\paren}[1]{(#1)}
\newcommand{\Paren}[1]{\left(#1\right)}
\newcommand{\Abs}[1]{\left\lvert#1\right\rvert}
\newcommand{\norm}[1]{\lVert#1\rVert}
\newcommand{\Norm}[1]{\Bigl\lVert#1\Bigr\rVert}
\newcommand{\Esymb}{\mathbb{E}}
\newcommand{\Psymb}{\mathbb{P}}
\DeclareMathOperator*{\E}{\Esymb}
\DeclareMathOperator*{\ProbOp}{\Psymb}
\renewcommand{\Pr}{\ProbOp}
\newcommand{\mper}{\,.}
\newcommand{\mcom}{\,,}
\newcommand\bdot\bullet
\DeclareMathOperator{\sdp}{sdp}
\DeclareMathOperator{\val}{\mathrm{val}}
\DeclareMathOperator{\poly}{poly}
\DeclareMathOperator{\polylog}{polylog}
\newcommand{\Lovasz}{Lov\'asz\xspace}
\newcommand{\Z}{\mathbb Z}
\newcommand{\R}{\mathbb R}
\newcommand{\wt}{\mathsf{wt}}
\newcommand{\PTIME}{\mathrm{P}}
\newcommand{\NP}{\mathrm{NP}}
\newcommand{\CSP}{\mathrm{CSP}}
\newcommand{\calP}{\mathcal{P}}
\newcommand{\calE}{\mathcal{E}}
\newcommand{\calI}{\mathcal{I}}
\newcommand{\avg}{\mathrm{avg}}
\newcommand{\Opt}{\mathrm{opt}}
\newcommand{\dtv}{d_{\mathrm{TV}}}
\newcommand{\cmplx}{\mathcal{C}}
\renewcommand{\leq}{\leqslant}
\renewcommand{\le}{\leqslant}
\renewcommand{\geq}{\geqslant}
\renewcommand{\ge}{\geqslant}
\let\epsilon=\varepsilon
\numberwithin{equation}{section}
\newcommand\MYcurrentlabel{xxx}
\newcommand{\MYstore}[2]{%
  \global\expandafter \def \csname MYMEMORY #1 \endcsname{#2}%
}
\newcommand{\MYload}[1]{%
  \csname MYMEMORY #1 \endcsname%
}
\newcommand{\MYnewlabel}[1]{%
  \renewcommand\MYcurrentlabel{#1}%
  \MYoldlabel{#1}%
}
\newcommand{\MYdummylabel}[1]{}
\newcommand{\torestate}[1]{%
  \let\MYoldlabel\label%
  \let\label\MYnewlabel%
  #1%
  \MYstore{\MYcurrentlabel}{#1}%
  \let\label\MYoldlabel%
}
\newcommand{\restatetheorem}[1]{%
  \let\MYoldlabel\label
  \let\label\MYdummylabel
  \begin{theorem*}[Restatement of \cref{#1}]
    \MYload{#1}
  \end{theorem*}
  \let\label\MYoldlabel
}
\newcommand{\restatelemma}[1]{%
  \let\MYoldlabel\label
  \let\label\MYdummylabel
  \begin{lemma*}[Restatement of \cref{#1}]
    \MYload{#1}
  \end{lemma*}
  \let\label\MYoldlabel
}
\newcommand{\restateprop}[1]{%
  \let\MYoldlabel\label
  \let\label\MYdummylabel
  \begin{proposition*}[Restatement of \cref{#1}]
    \MYload{#1}
  \end{proposition*}
  \let\label\MYoldlabel
}
\newcommand{\restatefact}[1]{%
  \let\MYoldlabel\label
  \let\label\MYdummylabel
  \begin{fact*}[Restatement of \prettyref{#1}]
    \MYload{#1}
  \end{fact*}
  \let\label\MYoldlabel
}
\newcommand{\restate}[1]{%
  \let\MYoldlabel\label
  \let\label\MYdummylabel
  \MYload{#1}
  \let\label\MYoldlabel
}
\newcommand{\eps}{\epsilon}
\newcommand*{\on}{\{\pm 1\}}
\def\norm#1{\left\| #1 \right\|}
\title{Strongly refuting all semi-random Boolean CSPs}
\author{Jackson Abascal\thanks{Research supported in part by NSF grant CCF-1563742. {\tt jabascal@cs.cmu.edu}} \and Venkatesan Guruswami\thanks{Research supported in part by NSF grant CCF-1908125. {\tt venkatg@cs.cmu.edu}} \and Pravesh K. Kothari\thanks{{\tt praveshk@cs.cmu.edu}}}
\date{Computer Science Department \\ Carnegie Mellon University \\ Pittsburgh, PA 15213}
\begin{document}


\maketitle
\thispagestyle{empty} 



\begin{abstract}
We give an efficient algorithm to strongly refute \emph{semi-random} instances of all Boolean constraint satisfaction problems. The number of constraints required by our algorithm matches (up to polylogarithmic factors) the best known bounds for efficient refutation of fully random instances. Our main technical contribution is an algorithm to strongly refute semi-random instances of the Boolean $k$-XOR problem on $n$ variables that have $\widetilde{O}(n^{k/2})$ constraints. (In a semi-random $k$-XOR instance, the equations can be arbitrary and only the right hand sides are random.)

\smallskip
One of our key insights is to identify a simple combinatorial property of random XOR instances that makes spectral refutation work. Our approach involves taking an instance that does not satisfy this property (i.e., is \emph{not} pseudorandom) and reducing it to a partitioned collection of $2$-XOR instances. We analyze these subinstances using a carefully chosen quadratic form as proxy, which in turn is bounded via a combination of spectral methods and semidefinite programming. The analysis of our spectral bounds relies only on an off-the-shelf matrix Bernstein inequality. Even for the purely random case, this leads to a shorter proof compared to the ones in the literature that rely on problem-specific trace-moment computations.

\end{abstract}

\clearpage


  \microtypesetup{protrusion=false}
  \tableofcontents{}
  \thispagestyle{empty}
  \microtypesetup{protrusion=true}

\clearpage

\pagestyle{plain}
\setcounter{page}{1}

\def\biasval{\text{bias}}
\section{Introduction}  
\label{sec:intro}

The study of random constraint satisfaction problems (CSPs) is a major thrust in the theory of average-case algorithm design and complexity. Investigations into this problem are central to disparate areas including cryptography~\cite{ABW10}, proof complexity~\cite{BB02}, hardness of approximation~\cite{MR2121179-Feige02}, learning theory~\cite{DBLP:conf/stoc/DanielyLS14}, SAT-solving~\cite{SAT}, statistical physics~\cite{CLP02}, and complexity theory~\cite{BKS13}.

For a predicate $P : \{0,1\}^k \to \{0,1\}$, an instance of $\CSP(P)$ consists of a set of Boolean variables and a collection of constraints each of which applies $P$ to a tuple of $k$ literals (variables or their negations). The goal is to understand the maximum fraction of constraints that can be satisfied by any Boolean assignment to the variables. In the \emph{fully random} model, the $k$-tuples and the literal patterns are chosen independently and uniformly at random. In the \emph{semi-random} model, the $k$-tuples are arbitrary (i.e., ``worst-case'') and only the negation patterns are chosen randomly. When the number $m$ of constraints is much larger than the number $n$ of variables, such instances are unsatisfiable with high probability. The algorithmic goal is to find an efficient algorithm that, for most instances, finds a \emph{refutation}---an efficiently verifiable certificate that no assignment can satisfy \emph{all} constraints (\emph{weak} refutation). More stringently, we can ask for a \emph{strong refutation} that, for some absolute constant $\delta > 0$,  the instance is not ``$(1-\delta)$-satisfiable" in that no assignment can satisfy even a fraction
$(1-\delta)$ fraction of constraints.

The question of refuting fully-random CSPs was first considered in the seminal work of Feige~\cite{MR2121179-Feige02}. Recently, this effort has had a series of exciting developments giving both new efficient algorithms~\cite{AOW15,RRS16,DBLP:conf/approx/BhattiproluGL17} and nearly matching lower-bounds~\cite{BCK15,KMOW16,FPV15} in several strong models of computation. 

The focus of this work is the more challenging \emph{semi-random} CSP model. The semi-random (and the related, ``smoothed'') CSP model was formalized by Feige~\cite{Fei07} following a long line of work~\cite{blum-spencer,FK01} on studying semi-random instances of natural problems. In that work, Feige gave a \emph{weak} refutation algorithm for semi-random $3$-SAT and noted: \emph{“Our motivation for studying semi-random models is that they are more challenging than random models, lead to more robust algorithms, and lead to a better understanding of the borderline between difficult and easy instances.”} For an introduction and recent survey of semi-random models, see the chapter~\cite{bwca-semirandom} in ``Beyond Worst-case Analysis of Algorithms."

The question of semi-random refutation was revived recently in the work of Allen, O'Donnell and Witmer~\cite{AOW15} who gave state-of-the-art efficient strong refutation algorithms for fully random CSPs and explicitly asked whether their algorithm could extend to the more challenging semi-random setting. This direction was further fleshed out in Witmer's PhD thesis~\cite{Witmer17} (see Chapter 5). Two years later, a sequence of exciting works~\cite{DBLP:journals/iacr/LinT17,DBLP:conf/eurocrypt/Lin16,DBLP:conf/focs/LinV16,DBLP:conf/crypto/Lin17,cryptoeprint:2016:1097,DBLP:journals/iacr/GayJLS20} in cryptography related this question to the security of arbitrary instantiations of Goldreich's local pseudorandom generator~\cite{Gol00} and showed that it implies indistinguishability obfuscation under standard assumptions. In their work on refuting hardness assumptions in this context, \cite{DBLP:conf/eurocrypt/BarakBKK18} note that better algorithms for semi-random refutation imply stronger attacks on the candidate local pseudo-random generators in cryptography. 

\subsection{Our results}
In this work, we resolve the question of semi-random CSP refutation by giving polynomial time algorithms that match the guarantees on fully random CSP instances. 

\begin{theorem} \label{thm:main-general-CSPs}

Fix a predicate $P : \{0,1\}^k \to \{0,1\}$. Let $r(P) = \E_x [P(x)]$ over a uniformly random $x \in \{0,1\}^k$, let $d(P)$ be the degree of $P$ as a multilinear polynomial, and let $i(P)$ be the largest integer $t$ for which $P^{-1}(1)$ supports a $t$-wise uniform distribution, i.e., a distribution whose joint marginals on every $t$ variables is uniform.

There is a polynomial time algorithm that given an $n$-variable semi-random instance of $\CSP(P)$ with $m \ge n^{(i(P)+1)/2} \mathrm{poly}(\log n)$ constraints, with high probability certifies that it is not $(1-\delta)$-satisfiable for some $\delta = \Omega_k(1)$. 

Further, when the number of constraints exceeds $ n^{d(P)/2} \mathrm{poly}((\log n)/\eps)$, the algorithm certifies that the instance is not $(r(P)+\epsilon)$-satisfiable  (the ``right'' bound).

\end{theorem}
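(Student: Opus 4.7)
\emph{Fourier decomposition and the two refutation tasks.}
Using the $\{\pm1\}$ encoding $y_i=(-1)^{x_i}$ and $\eta^{(c)}_j=(-1)^{\xi^{(c)}_j}$, the empirical satisfaction rate of an assignment $y$ expands as
\begin{equation*}
\frac1m\sum_c P\Paren{\eta_1^{(c)}y_{i_1^{(c)}},\ldots,\eta_k^{(c)}y_{i_k^{(c)}}}
= r(P) + \sum_{\emptyset\neq S\subseteq[k]}\hat P(S)\cdot\frac{f_S(y)}{m},
\end{equation*}
where $f_S(y):=\sum_c\bigl(\prod_{j\in S}\eta_j^{(c)}\bigr)\prod_{j\in S}y_{i_j^{(c)}}$ is a multilinear polynomial of degree $|S|$ in $y$ whose coefficients are i.i.d.\ Rademachers (over the randomness of the negations), supported on a worst-case $|S|$-uniform multi-hypergraph. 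Strong refutation at bias $\eps$ thus reduces to certifying $\max_{y\in\{\pm1\}^n}|f_S(y)|\le\eps m$ for every nonempty $S$ with $\hat P(S)\neq 0$, which restricts attention to $|S|\le d(P)$. For weak refutation, the plan is to first extract a polynomial $q\colon\{\pm1\}^k\to\R$ of degree at most $i(P)+1$ with $P\le q$ pointwise on $\{\pm1\}^k$ and $\E[q]\le 1-\delta$ for some $\delta=\Omega_k(1)$: such a $q$ exists by LP duality precisely because $P^{-1}(1)$ does not support an $(i(P)+1)$-wise uniform distribution, so the LP maximising $\E_\mu[P]$ over $(i(P)+1)$-wise uniform $\mu$ has optimum strictly less than $1$, and the optimal dual delivers $q$. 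Replacing $P$ by $q$ collapses weak refutation to bounding $f_S$ only for $|S|\le i(P)+1$.

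\emph{Reduction to a Kikuchi spectral norm.}
It then suffices to prove, for any integer $t\ge1$, any worst-case $t$-uniform multi-hypergraph $H$ on $n$ vertices with $m$ edges, and i.i.d.\ Rademacher signs $\{b_e\}_{e\in H}$, that with high probability one can efficiently certify
$\max_{y\in\{\pm1\}^n}\bigl|\sum_{e\in H}b_e\prod_{j\in e}y_j\bigr|\le\eps m$
whenever $m\ge n^{t/2}\poly((\log n)/\eps)$. The certificate I would use is the operator norm of a \emph{Kikuchi matrix} $K$ at a carefully chosen level $\ell=O(\log n/\eps^2)$: $K\in\R^{\binom{n}{\ell}\times\binom{n}{\ell}}$ has rows and columns indexed by $\ell$-subsets of $[n]$, with $K_{A,B}=b_e$ whenever $A\symdiff B=e\in H$ and $|A\cap B|=\ell-t/2$ (for even $t$; for odd $t$ a bipartite variant with rows indexed by $\ell$-subsets and columns by $(\ell+1)$-subsets is needed). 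A combinatorial count of the pairs $(A,B)$ realising each edge yields the identity $\binom{n-t}{\ell-t/2}\sum_e b_e\prod_{j\in e}y_j=\iprod{\chi_\ell(y),K\chi_\ell(y)}$ where $\chi_\ell(y)_A:=\prod_{j\in A}y_j$, so $\max_y|f|\le\|K\|_{\text{op}}\cdot\binom{n}{\ell}/\binom{n-t}{\ell-t/2}$ and the entire task reduces to bounding $\|K\|_{\text{op}}$.

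\emph{Main obstacle: worst-case hypergraph irregularity.}
The hard part is that $H$ is worst-case: its degree sequence may be wildly non-uniform, so some rows of $K$ carry many more nonzero entries than average and a naive Matrix Bernstein bound on $\|K\|_{\text{op}}$ falls short of what is needed by a polynomial factor. To close the gap I would, following the Kikuchi-matrix programme recently developed for semi-random and smoothed Boolean CSP refutation, run a \emph{bucketing and row-pruning} decomposition: partition $H$ into $O(\log n)$ nearly-regular sub-hypergraphs by iteratively peeling off vertices whose current degree lies in dyadic ranges $[D_j,2D_j)$, apply Matrix Bernstein to each bucket's Kikuchi matrix separately (the row sparsity is now controlled by $D_j\cdot\ell^{O(t)}$), and sum the resulting norms. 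Choosing $\ell$ adapted to each bucket's density yields the target $\|K\|_{\text{op}}$ bound and hence the desired $\eps m$ bound on $f$. Taking the union over the relevant Fourier levels $|S|\le d(P)$ (respectively $|S|\le i(P)+1$) then yields the strong refutation threshold $m\ge n^{d(P)/2}\poly((\log n)/\eps)$ and the weak refutation threshold $m\ge n^{(i(P)+1)/2}\poly(\log n)$. The row-pruning step---made delicate by multi-edges in the semi-random model and by the need to handle all Fourier levels $\le d(P)$ uniformly---is where the most technical care will be needed.
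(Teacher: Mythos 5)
Your front end is fine and matches the paper's: decomposing $P$ into its Fourier levels so that refutation reduces to certifying $\max_{y\in\{\pm1\}^n}\bigl|\sum_e b_e\prod_{j\in e}y_j\bigr|\le\eps m$ for semi-random $t$-XOR sub-instances with $t\le d(P)$, and extracting via LP duality a degree-$(i(P)+1)$ pointwise upper bound $q\ge P$ with $\E[q]\le 1-\delta$ for the weak-refutation threshold, are exactly the standard reductions (Feige's XOR principle, AOW) that the paper also invokes; the even-$t$ levels are likewise unproblematic (the paper handles them with a single Grothendieck SDP after balanced linearization). The genuine gap is at the odd-$t$ levels, which is precisely the crux of this theorem. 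Your proposed certificate for odd $t$ --- a bipartite Kikuchi matrix with rows indexed by $\ell$-subsets and columns by $(\ell+1)$-subsets at level $\ell=O(\log n/\eps^2)$ --- cannot certify the bound at $m\approx n^{t/2}$, for a reason independent of how sharply you bound $\|K\|$. Each odd edge $e$ is realized by $\binom{t}{(t-1)/2}\binom{n-t}{\ell-(t-1)/2}$ pairs $(A,B)$ with $A\oplus B=e$, while the certificate pays $\|\chi_\ell(y)\|\cdot\|\chi_{\ell+1}(y)\|\approx\binom{n}{\ell}\sqrt{n/\ell}$. Since $K$ has roughly $m\binom{t}{(t-1)/2}\binom{n-t}{\ell-(t-1)/2}$ entries of magnitude $1$, even the best conceivable spectral bound $\|K\|\approx\sqrt{d_{\mathrm{avg}}}$ (Frobenius norm over square root of rank) leaves a certified upper bound of order $\sqrt{m}\,(n/\ell)^{(t+1)/4}$; at $\ell=\polylog(n)$ this certifies $\eps m$ only once $m\gtrsim n^{(t+1)/2}/\polylog(n)$ --- exactly the trivial odd-arity threshold, off by the same $\sqrt{n}$ factor as naive unbalanced linearization. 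To reach $n^{t/2}$ along this route you would need $\ell=n^{\Omega(1)}$ (about $n^{1/(t+1)}$), i.e., sub-exponential rather than polynomial time.

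What is missing is the step that actually creates cancellation for odd arity: pairing constraints that share a distinguished variable (the Cauchy--Schwarz/decoupling trick), after which one must control a derived even-arity object whose coefficients $b_e b_{e'}$ are dependent and whose pairing structure is worst-case. That is where this paper spends essentially all of its effort: the partitioned $2$-XOR formulation, the butterfly-degree classes with a matrix Bernstein bound per class, and the forking-off of constraints incident to high-degree $(i,v)$ pairs into a bipartite $2$-XOR instance refuted by the Grothendieck SDP. Your bucketing/row-pruning instinct is sound, and a Kikuchi-style route of this kind can in fact be carried out (it is essentially the later smoothed-refutation argument of Guruswami--Kothari--Manohar), but that argument also goes through the Cauchy--Schwarz pairing for odd arity and then confronts the sign dependence and irregular butterfly-type degrees head on; as written, your proposal skips the pairing entirely and thereby misses the main difficulty the theorem is about. (Two minor points: the Kikuchi identity needs the additional $\binom{t}{\lfloor t/2\rfloor}$-type multiplicity factor, which is harmless; and the level-$S$ coefficients are i.i.d.\ across constraints but correlated across different $S$ on the same constraint, which is fine only because you refute each level separately --- worth stating explicitly.)
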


For the significance of the quantity $i(P)$ in the context of CSP refutation, see~\cite{AOW15} who showed strong refutation of \emph{random} CSPs for a similar number of constraints. Our work thus extends the state-of-the-art in CSP refutation from random to semi-random CSPs.

\medskip\noindent\textbf{It's all about XOR.} 
The question of refuting arbitrary Boolean CSPs immediately reduces to refuting the specific $k$-XOR CSPs\footnote{In $k$-XOR CSPs, the predicate $P$ computes the XOR of the $k$ input bits.} for all $k$ via Feige's~\cite{MR2121179-Feige02} ``XOR principle'' formalized in the work of~\cite{COCF10} (see also, Theorem 4.9 in~\cite{AOW15}). This involves writing the Boolean predicate as a multilinear polynomial (its discrete Fourier transform) and refuting each CSP instance corresponding to a term in this expansion. For even $k$, Witmer and Feige~\cite{Witmer17} observed that an idea from~\cite{Fei07} already gives a strong semi-random refutation algorithm (via ``discrepancy'') that matches the best-known guarantees in the fully-random model. For odd $k$, however, the best known efficient algorithm succeeds only when the number of constraints is $\tilde{\Omega}(n^{\lceil k/2 \rceil})$ which is larger by a factor of $\sqrt{n}$ when compared to the fully random case. Our main technical contribution is to resolve this deficiency for odd $k$:

 \begin{theorem}[Main, detailed version appears  as Theorem~\ref{kxor_refutation}] \label{thm:main-odd-xor}
\label{thm:main-intro}
There is a polynomial time algorithm that given an $n$-variable semi-random $k$-XOR instance for odd $k$ with $m \ge n^{k/2} \mathrm{poly}((\log n)/\eps)$ equations, with high probability certifies that it is not $(1/2+\eps)$-satisfiable.
\end{theorem}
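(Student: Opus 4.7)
My plan is to reduce semi-random odd $k$-XOR refutation to an even-arity XOR refutation via a Cauchy--Schwarz squaring, and then to apply a high-level Kikuchi-matrix spectral refutation to each even-arity slice that appears. Setting $\sigma(x) = 2 \val(x) - 1 = \frac{1}{m}\sum_C b_C \chi_C(x)$ with $\chi_C(x) = \prod_{i \in C} x_i$, squaring yields
\[ m^2 \sigma(x)^2 \;=\; m \;+\; \sum_{C \neq C'} b_C b_{C'} \chi_{C \triangle C'}(x). \]
Since $k$ is odd, every nonzero $|C \triangle C'|$ is even, so the off-diagonal part breaks into even-arity signed XOR instances indexed by $t \in \{2, 4, \ldots, 2k\}$; to certify $|\sigma(x)| \leq 2\eps$ it suffices to certify an $O(\eps^2 m^2)$ upper bound on each such slice.

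For each fixed even $t$ the inner sum has roughly $m^2/n^{k-t/2}$ ``effective'' pair-constraints (in the regular case, by double counting). I would bound it by the spectral norm of a level-$\ell$ Kikuchi matrix $A_t$ indexed by $\binom{[n]}{\ell}$, whose $(S,T)$-entry is a signed count over pairs $(C,C')$ with $|C \triangle C'| = t$ and $C \triangle C' = S \triangle T$ (subject to a cardinality condition on $|S \cap T|$). A standard test-vector argument with $u_S = \chi_S(x)$ then bounds the inner sum by $\|A_t\| \cdot N_t$ for an explicit combinatorial normalization $N_t$. Choosing $\ell = \Theta(\log(n)/\eps^2)$, as in the Guruswami--Kothari--Manohar treatment of semi-random even XOR, is the right scale.

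To bound $\|A_t\| \leq \polylog(n)$ I would use the trace moment method: expand $\E[\Tr(A_t^{2q})]$ as a sum over closed walks of length $2q$ in the associated Kikuchi graph, where each edge carries a factor $b_C b_{C'}$. Because the $b_C$'s are independent Rademacher signs, only walks in which each individual $b_C$ appears an even number of times contribute nonzero expectation, and a combinatorial count of such walks gives the desired bound so long as $m \geq n^{k/2}\polylog(n)/\eps^{O(1)}$, which after normalization translates into the target $|\sigma(x)|\leq 2\eps$.

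The main obstacle is the adversarial hypergraph, which can produce heavy-tailed degree profiles that wreck a naive trace bound: in the purely random setting the walk count is dominated by a small set of ``combinatorially regular'' walks that can be enumerated explicitly, but adversarial clustering at a single vertex or a single pair can blow up specific terms in the expansion. I plan to address this by a degree-profile bucketing of the constraints: partition them according to the max-degrees of their sub-tuples (up to size $k$), build a separate Kikuchi matrix for each regular bucket, bound its spectral norm by a Rademacher-chaos trace moment tailored to that bucket, and combine via the triangle inequality. The high level $\ell = \Theta(\log(n)/\eps^2)$ is what provides enough averaging within each bucket for the trace bound to succeed despite the chaos structure, just as in the even-$k$ semi-random analysis; the new technical content is to carry this out with coefficients $b_C b_{C'}$ rather than independent signs, accounting for the extra dependencies that squaring introduces between entries of $A_t$.
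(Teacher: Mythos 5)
Your route (square via Cauchy--Schwarz, slice by $|C\triangle C'|$, and certify each even-arity slice spectrally with level-$\Theta(\log n/\eps^2)$ Kikuchi matrices bounded by trace moments) is genuinely different from the one in the paper, and it is plausible in outline, but as written it has a gap exactly at the step the semi-random setting makes hard. Your plan rests on the claim that, after a ``degree-profile bucketing,'' every bucket's Kikuchi matrix has spectral norm $\polylog(n)$ and the buckets can be recombined by the triangle inequality. For a worst-case hypergraph this is precisely what fails: the entries of your matrices carry the \emph{dependent} coefficients $b_C b_{C'}$, and an adversary can concentrate constraint pairs on few sub-tuples (heavy codegrees, many constraints through one pair or one $(k-1)$-set), which makes the natural spectral certificate far too large even though the slice's true value may be small --- this is the phenomenon the paper points to when it notes that the Barak--Moitra spectral bound ``is simply not even true in the semi-random case.'' Saying ``a combinatorial count of such walks gives the desired bound'' assumes the conclusion for the crux: the trace-moment count with chaos coefficients over an adversarial pairing structure is not routine, and the heavy buckets cannot be disposed of by a norm bound at all; they need a different mechanism. (The later Kikuchi-based treatments you cite indeed make this work, but only with substantial additional machinery --- decoupling of the two sign factors, row pruning, and a hypergraph regularity decomposition with recursion to lower arity for the irregular part --- none of which is in your sketch, and which constitutes most of the actual proof.)

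For contrast, the paper avoids trace moments entirely. It first pins a pivot variable in each constraint to form a \emph{partitioned} $2$-XOR instance on $\binom{n}{(k-1)/2}$-indexed variables with $\ell=n$ parts, then splits the constraints into a part where every pivot--variable degree $\deg_i(v)$ is $O(\eps^{-2})$ and a ``heavy'' part. The heavy part is not certified spectrally: it is relaxed to a bipartite $2$-XOR whose left side consists of the few heavy pivot pairs (at most $\eps^2 m/C$ of them), and refuted by Chernoff plus union bound made algorithmic through the Grothendieck inequality. The bounded part is handled by a $1/\sqrt{t_i}$-weighted Cauchy--Schwarz, a bucketing of variable pairs by their \emph{butterfly degree} $\gamma(v,v')=\sum_i \deg_i(v)\deg_i(v')/t_i$, a rectangular matrix Bernstein bound on each block $M_{j,k}$, and the key saving that Boolean vectors restricted to a small bucket have small effective $\ell_2$ norm, via $\sqrt{|S_j|\,|S_k|}\cdot\|M_{j,k}\|$. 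If you want to pursue your Kikuchi route, you must add an analogue of these two missing components: a treatment of heavy/irregular buckets that does not go through a spectral norm bound (few-variable SDP refutation or arity reduction), and a genuine argument for the dependent-sign trace bound on the regular buckets.
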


A priori, the issue of odd vs even arity XOR might appear like a mere technical annoyance. However, even for the fully random case, strongly refuting odd-arity XOR turns out to be trickier\footnote{see discussion following Theorem 2.1 on Page 4 of~\cite{AOW15}.} and was resolved only in the recent work of Barak and Moitra~\cite{BM15} by means of an intense argument based on the trace-moment method. In his work on \emph{weak} refutation of semi-random 3-SAT, Feige manages to skirt this issue by observing that proving a slightly non-trivial (but still $1-o_n(1)$) upper-bound on the value of a semi-random $3$-XOR instance is enough. He accomplishes this goal by a sophisticated combinatorial argument (that improved on a work of Naor and Verstræte~\cite{MR2399017-Naor08}) based on the existence of small \emph{even covers} in sufficiently large hypergraphs. 

To obtain the strong refutation guaranteed by our work, we replace this combinatorial argument by a careful combination of spectral bounds and semidefinite programming. As explained in more detail in Section~\ref{sec:overview}, we identify a certain pseudorandomness property (which holds for random $k$-XOR instances) under which strong refutation follows from spectral bounds of certain matrices that are sums of independent random matrices. To tackle the semi-random case, we decompose the instance into various structured sub-instances and associated quadratic forms whose sum bounds the optimum of the CSP instance. While the spectral norm of the matrix associated with a sub-instance might be large, for the CSP refutation we only care about vectors with commensurately smaller effective $\ell_2$-norm. 
We also need to tackle variables which participate in too many constraints separately, and we do this by forking off an auxiliary $2$-XOR instance which is refuted using the Grothendieck inequality.

 Our work leads to many natural open questions. 
 An important one is whether we can remove the polylogarithmic multiplicative factors in Theorem~\ref{thm:main-intro} and strongly refute with $O(n^{k/2})$ constraints. We note that for the case of odd $k$, this is open even in the case of random $k$-XOR.  While this might seem like a rather technical question, it is interesting because a resolution would likely call for more sophisticated random matrix theory methods. In this regard, the work of \cite{DMOSS19} on random NAE-3SAT might be worth highlighting.

\subsection{Related works}
Here, we give a brief survey of the long line of work on average-case complexity of CSPs. 

There is  a long line of work on studying algorithms and lower-bounds for refuting/solving random CSPs and connections to other computational problems. We only include the highlights here and point the reader to~\cite{KMOW16} for a more extensive survey. Feige~\cite{MR2121179-Feige02} made a fruitful connection between average-case complexity of random CSPs and hardness of approximation and formulated his ``R3SAT'' hypothesis that conjectures that there's no polynomial time algorithm for strong refutation of $3$-SAT with $m = Cn$ constraints for some $C \gg 4.27$. This and its generalization to other CSPs have served as starting points of reductions that have yielded hardness results in several domains including cryptography, learning theory, and game theory~\cite{Fei02,Ale03,BKP04, DFHS06, Bri08, AGT12,DLS13,DBLP:conf/stoc/DanielyLS14,Dan15,ABW10,App13b}. Very recently, variants of the assumptions were used to give candidate constructions of indistinguishability obfuscation in cryptography~\cite{DBLP:journals/iacr/LinT17}.

Coja-Oghlan~\cite{COGL07} gave the first $\delta$ (i.e. certifying a bound of $1-\delta$ for some fixed constant $\delta > 0$) refutation algorithms for $3$ and $4$-SAT. Early works used combinatorial certificates of unsatisfiability for random 3SAT instances \cite{Fu96,DBLP:journals/siamcomp/BeameKPS02}. Goerdt and Krivilevich~\cite{GK01} were the first to examine spectral refutations with quick follow-ups~\cite{FGK05,GL03}. More recently, Allen, O'Donnell, Witmer~\cite{AOW15} gave a polynomial time algorithm for refuting random CSPs and Raghavendra, Rao and Schramm~\cite{RRS16} extended it to give sub-exponential algorithms with non-trivial guarantees. Both these algorithms are captured in the sum-of-squares semidefinite programming hierarchy. 

There's a long line of work proving lower-bounds for refuting random CSPs in proof complexity beginning with the work of Chv\'{a}tal and Szemer\'{e}di~\cite{CS88} on resolution refutations of random $k$-SAT. Ben-Sasson and Wigderson~\cite{BSW01, Ben01} and Ben-Sasson and Impagliazzo and Alekhnovich and Razborov~\cite{BI99, AR01a} further extended these results. 

One of the first lower bounds for random CSPs using SDP hierarchies was given by Buresh-Oppenheim~et~al.~\cite{BGH+03}; it showed that the \Lovasz--Schrijver$_+$ (LS$_+$) proof system cannot refute random instances of $k$-SAT with $k \geq 5$ and constant clause density $\Delta = m/n$.  Alekhnovich, Arora, and Tourlakis~\cite{AAT05} extended this result to random instances of $3$-SAT. The strongest results along these lines involve the Sum of Squares (AKA Positivstellensatz or Lasserre) proof system (see e.g.,~\cite{OZ13, BS14, Lau09,TCS-086} for surveys concerning SOS). Starting with Grigoriev~\cite{Gri01}, a number of works~\cite{Sch08,Tul09,BGMT12,OW14,TW13,MW16} have studied the complexity of SoS and the related Sherali-Adams and Lovasz-Schrijver proof systems for refuting random instances of CSPs. The strongest known results in this direction are due to the recent works~\cite{BCK15} and ~\cite{KMOW16} who gave an optimal 3-way trade-off between refutation strength, number of constraints and a certain complexity measure of the underlying predicate.  

Beyond semialgebraic proof systems and hierarchies, Feldman, Perkins, and Vempala \cite{FPV15} proved lower bounds for refutation of CSP$(P^\pm)$ using statistical algorithms when $P$ supports a $(t-1)$-wise uniform distribution.  Their results are incomparable to the above lower bounds for LP and SDP hierarchies: the class of statistical algorithms is quite general and includes any convex relaxation, but the \cite{FPV15} lower bounds are not strong enough to rule out refutation by polynomial-size SDP and LP relaxations. Very recently, Garg, Kothari, and Raz~\cite{DBLP:journals/corr/abs-2002-07235} gave lower-bounds for distinguishing planted from random CSPs in the bounded-memory model.

Semi-random models for various graph-theoreric optimization problems have been studied in \cite{blum-spencer,FK01,FK00,Coja-Oghlan07a,Coja-Oghlan07b,MMV12,MMT20} to mention just a few works. Semi-random models of the Unique Games CSP were studied in \cite{KMM}.
These works typically dealt with the optimization version of finding a good or planted solution in a semi-random instance.

\section{Overview of our approach} \label{sec:overview}
In this section, we give a high-level overview of the main ideas in our refutation algorithm by focusing on the case of $k$-XOR. 
This immediately implies refutation algorithms for all Boolean CSPs using Feige's standard ``XOR principle"~\cite{MR2121179-Feige02}. 

We will use the $\pm 1$ notation to express Boolean values, so that the XOR operation becomes the product. We can view a $k$-XOR instance $\phi$ as composed of a $k$-uniform hypergraph with $m$ edges specifying the tuples that are constraint constrained, a set of RHS's  $r \in \on^m$. We allow repeated edges in the hypergraph; this is necessary for the reduction from general semi-random CSP to XOR.

For an assignment $x \in \on^m$, define \[ \biasval_{\phi}(x) =\frac{1}{m} \sum_{j = 1}^m r_j \cdot \prod_{i \in C_j} x_i \ , \]
and let 
\[ \biasval_{\phi} := \max_{x \in \on^n} \biasval_\phi(x) \ . \]
Then, the fraction of constraints $\val_\phi(x)$ satisfied by an assignment $x$ equals 
\[ \val_\phi(x) = \frac{1+\biasval_{\phi}(x)}{2} \ . \]
A \emph{semi-random refutation} algorithm takes $H,r$ and outputs a value $v$ such that 
\begin{enumerate}
    \item  $v \geq \biasval_{\phi}$, and
    \item $\Pr_{r \in \on^m} [v \leq 2\epsilon] \geq 0.99.$\footnote{Observe that one can obtain a \emph{weak refutation} algorithm by simply running Gaussian elimination on the $k$-XOR system and outputting $1-1/m$ if the system is insoluble. However such algorithms do not yield even weak refutation algorithms for other CSPs and are not of interest to us in this work.} 
\end{enumerate}
Whenever $m \gg n/\epsilon^2$, there is an inefficient algorithm that succeeds in both these goals: the algorithm simply outputs $\biasval_{\phi}$. To obtain efficient refutation algorithms, we need to come up with polynomial-time computable upper bounds on $\biasval_{\phi}$ that are good approximations to $\biasval_{\phi}$ for semi-random $\phi$. 

All algorithms for refuting CSPs (including ours) are captured by a canonical semidefinite program (an appropriate number of levels of the sum-of-squares hierarchy). However, our analysis can be viewed as giving a stand-alone efficient algorithm that uses only a basic SDP and spectral norm bounds on appropriately (and efficiently) generated matrices. We will adopt this perspective in this overview. Our algorithms apply to settings where the constraint hypergraph $H$ is $k$-uniform but can have repeated hyperdges (i.e. is a $k$-uniform multi-hypergraph). This level of generality is required in order to obtain our semi-random refutation algorithm for all CSPs. However, for the purpose of the overview, we will restrict ourselves to simple $k$-uniform hypergraphs $H$. 

\subsection{Refuting $2$-XOR Instances}
For $k=2$, there are two natural strategies that yield optimal (up to absolute constants) refutation algorithms. 
We explain them in detail because one can view all known strong refutation algorithms as simply reducing to $2$-XOR and then using one of these two strategies. 

The key observation is that $\biasval_{\phi}(x)$ is a quadratic form: $\biasval_{\phi}(x) = x^{\top} A x$ where $A(i,\ell) = A(\ell,i) = r_j/2$ if $C_j = \{i,\ell\}$ for some $j$ and $0$ otherwise.
%
%
This immediately allows two natural, efficient upper-bounds for $\biasval_{\phi}(x)$.
\begin{enumerate}
\item \textbf{Spectral Method.} We observe that $\biasval_{\phi}(x) =x^{\top} A x \leq \norm{x}_2^2 \norm{A}_2$. Thus, bounding the spectral norm of $A$, which is efficiently computable, immediately gives an upper-bound on $\biasval_{\phi}$. When $\phi$ is random, $A$ is a sparse random matrix and known results~\cite{bandeira2016} imply that $\norm{A}_2 \leq O(\sqrt{\log n})$ with high probability.
This immediately yields a refutation algorithm that succeeds whenever $m\gg n \log n/\epsilon^2$. 

\item \textbf{Semidefinite programming.} This is a refinement of the above idea and relies on the following reasoning: $\biasval_{\phi} = \max_{x \in \on^n} x^{\top} A x \leq \max_{x,y \in \on^n} x^{\top} A y$. This latter quantity is known as the ``infinity to 1'' norm of $A$ and the well-known Grothendieck inequality shows that the value of a natural semidefinite programming relaxation (see Lemma~\ref{grothendieck_sdp})  $\sdp(\phi)$ approximates it within a factor $< 2$. It is easy to argue using the Chernoff bound followed by a union bound that $\max_{x,y \in \on^n} x^{\top} A y \leq O(\sqrt{n/m})$. Thus, $\biasval_{\phi} \leq \sdp(\phi) \leq O(\sqrt{n/m})$ with high probability giving a refutation algorithm that works whenever $m \gg n/\epsilon^2$. 
\end{enumerate}

\subsection{Refuting $k$-XOR for $k > 2$}
When $k>2$, a natural strategy is to simply find a ``higher-order" (tensor) analog of the Grothendieck inequality/eigenvalue-bound we used above. However, there are no tractable analogs of these bounds that provide a $O(1)$ approximation. Indeed, the best known algorithms~\cite{MR2448456-Khot08} lose $\poly(n)$ factors in approximation and such a loss might be necessary. As a result, all known efficient refutation algorithms rely on reducing $k$-XOR instances to $2$-XOR via a \emph{linearization} trick. 

\medskip\noindent\textbf{The case of even $k$.} For even $k$, this trick considers a $2$ XOR instance with ${n \choose {k/2}}$ variables, each corresponding to $k/2$-tuples of the original variables and each constraint being a $2$-XOR constraint in this new, linearized space. Observe that since when $m \gg n^{k/2}$, in the linearized space, we have $n^{k/2}$ variables and $\gg n^{k/2}$ constraints with independently random RHS. Using a simple argument based on Chernoff + union bound immediately shows that the resulting $2$-XOR instance has value at most $O(\sqrt{n^{k/2}/m})$. One can then use either of the above two methods to obtain a good approximation to this value and get a refutation algorithm. In particular, spectral refutation gives an algorithm that succeeds whenever $m \gg n^{k/2}(\log n)/\epsilon^2$ for \emph{random} $k$-XOR instances. This method, however, cannot yield any non-trivial refutation algorithm in the semi-random case.
Fortunately, the SDP method continues to work fine giving a semi-random refutation algorithm for even $k$ that succeeds whenever $m\gg n^{k/2}/\epsilon^2$. 

\medskip\noindent \textbf{The case of odd $k$ and the main challenge.} For odd $k$, the situation is more complicated since there is no natural ``balanced" linearization. The natural ``unbalanced" linearization yields a $2$-XOR instance on $n^{(k+1)/2}$ variables yielding a refutation algorithm that succeeds whenever $m \gg n^{(k+1)/2}/\epsilon^2$ that is off from our goal by a $\sqrt{n}$ factor. For the case when $H$ is \emph{random}, Barak and Moitra~\cite{BM15} found a clever way forward using their \emph{Cauchy-Schwarz trick}. This idea constructs a $2(k-1)$-XOR instance $\phi'$ with $n^{k-1}$ variables from the input $k$-XOR instance $\phi$, relates the value of $\phi'$ to $\phi$ and then refutes $\phi'$ (which is an even arity XOR instance) via linearization followed by the spectral method above. Specifically, $\phi'$ takes every pair of constraints in $\phi$ that share a variable and ``XORs'' them to derive a new, $2(k-1)$ XOR constraint. For $k=3$, this results in a $4$-XOR instance.

Notice that a priori, it is not clear that the $2$-XOR instance produced by linearizing the $2(k-1)$-XOR instance $\phi'$ produced via this route has a small true value! Indeed, the resulting constraints have \emph{dependent} right hand sides with only $m \approx n^{k/2}$ bits of randomness  while the number of variables in the linearized space is $n^{k-1}$. Thus, we cannot execute a Chernoff + union bound style argument to upper-bound the true value as we did for the even $k$ case above. Nevertheless, the proof of Barak and Moitra cleverly exploits the randomness in the constraint graph $H$ to prove this (via the spectral upper-bound) using a somewhat intense calculation using the trace method. 

Alas, this strategy is a non-starter in our case. The spectral bound that the technique of Barak and Moitra proves is simply not even true in the semi-random case, in general.
%
 Note that the true value of this $2$-XOR instance \emph{may} still be small---it is just that the spectral upper bound is too loose to capture it. Indeed, no known method (including ours) gives an upper bound on the value of this particular $2$-XOR instance in the semi-random case.

\medskip\noindent\textbf{Aside: Feige's semi-random refutation via combinatorial method.} It is instructive to note that Feige \emph{did} manage to obtain an efficient \emph{weak} refutation algorithm for semi-random\footnote{Feige's argument applies to a slightly more general ``smoothed" setting. We have not (yet) investigated if our approach also extends to this setting.}
$3$-SAT. As all refutation algorithms, Feige's method also relies on refuting semi-random $3$-XOR in order to accomplish this - precisely our main interest in this paper. However, his method skirts the above difficulties by giving up on the goal of \emph{strongly} refuting $3$-XOR and showing instead that for weak refutation of $3$-SAT, one only needs to weakly refute semi-random $3$-XOR with a value upper bound of $1- \Omega(1)/ \Paren{m-\tilde{O}(n^{3/2})}$. This latter bound can be obtain by a \emph{combinatorial method} that, at a high-level, involves breaking the $m$ constraints into groups of size $\approx n^{3/2}$ with an appropriate combinatorial structure (called \emph{even covers}) each and proving that each of them is individually unsatisfiable. 

Indeed, up until this work, the case of semi-random refutation of odd arity XOR, while occurring in several applications has remained unresolved. 


\medskip We now describe the key insights that lead to our method. While our overall algorithm could be thought of as simply analyzing the value of a natural SDP (see Section~\ref{refuting_semirandom_partitioned_2xor}), it is helpful to break open the analysis and view the algorithm as decomposing the instance and then using spectral method and SDP discussed above for appropriately chosen pieces. We adopt this perspective in this overview. We will focus on the case of $k=3$ to keep things simple here. 
\subsection{A matrix Bernstein based proof of random $3$-XOR refutation} 
In order to motivate our plan, it is helpful to give a quick and painless proof (that does not use the trace-moment method) of the spectral refutation for refuting \emph{random} $3$-XOR.  Our analysis will allow us to abstract out a key property of a random instance that causes our proof strategy to succeed. We will then use this property to define \emph{pseudo-random} instances and show that refuting an \emph{arbitrary} $3$-XOR instance can be reduced to refuting a pseudo-random instance and a collection of instances that have a certain usable structure. 

In addition to giving a simple proof of spectral refutation for random $3$-XOR\footnote{We note~\cite{DBLP:journals/corr/abs-1904-03858}  recently found a similar simple proof for the case of even arity \emph{random} XOR.}, this proof will also allow us to concretely explain both the decoupling and linearization steps.

An instance of $3$-XOR with $m$ constraints can be described by a size $n$ collection of  $n \times n$ matrices as follows.
Pick an arbitrary ordering on the $n$ variables. 
Concretely, define $B_{j,k} = B_{k,j} = r(i,j,k)/2$ if the constraint $x_i x_j x_k = r(i,j,k)$ appears in $\phi$ and $i <j$ and $i< k$ in the chosen ordering - in this case, we say that the constraint was ``colored'' by color $i$. Note that every constraint is colored by exactly one of the $n$ colors. We can thus write:
\begin{equation} \label{eq:1}
\biasval_{\phi}(x) = \frac{1}{m} \sum_{i = 1}^n x_i \cdot x^{\top} B_i x \mper
\end{equation}
\medskip\noindent\textbf{Decoupling.} Next, the idea is to apply Cauchy-Schwarz inequality on the RHS above and use $x_i^2 = 1$ for every $i$ to conclude that:
\[
 \biasval_{\phi}(x)^2 \leq \frac{1}{m^2} \sum_{i = 1}^n x_i^2 \sum_{i = 1}^n \Paren{x^{\top} B_i x}^2 = \frac{n}{m^2} \sum_{i = 1}^n \Paren{x^{\top} B_i x}^2 
 \mper
\]
Observe that one can view this step as ``decoupling'' the smallest variable (in the ordering chosen above) from the other two in every constraint. 

\medskip\noindent\textbf{Linearization.} Next, we observe that $\paren{x^\top B_i x}^2 = \paren{x^{\otimes 2}}^{\top} \Paren{ B_i \otimes B_i }x^{\otimes 2}$ where $x^{\otimes 2}$ is the $n^2$-dimensional vector indexed by ordered pairs of the $n$ variables and $x^{\otimes 2}(u,v) =x_u x_v$ for every $u,v \in [n]$. This allows us to conclude:
\begin{equation}
 \max_{x \in \on^n} \biasval_{\phi}(x)^2 \leq \frac{n}{m^2} \max_{y \in \on^{n^2}} \sum_{i = 1}^n y^{\top} B_i \otimes B_i y \leq \frac{n}{m^2} \norm{y}_2^2 \Norm{\sum_{i\leq n} B_i \otimes B_i} = \frac{n^3}{m^2}\Norm{\sum_{i\leq n} B_i \otimes B_i} 
 \label{eq:2-xor-reduction}\mper 
\end{equation}
Observe that we have arrived at a \emph{$2$-XOR} instance and a spectral upper-bound on its value. This is precisely the instance and the spectral upper-bound that the proofs of Barak and Moitra~\cite{BM15} (and Allen,O'Donnell and Witmer~\cite{AOW15}) generate. The spectral upper-bound is analyzed using a somewhat complicated trace-moment method based analysis in those works.

\medskip\noindent\textbf{Enter Matrix Bernstein.} Our analysis of the RHS of~\eqref{eq:2-xor-reduction} needs two simple observations: 
\begin{enumerate}
    \item the matrices $B_i \otimes B_i$ are independent for each $i$ and $\E B_i \otimes B_i = 0$,\footnote{Technically, $\E B_i \otimes B_i$ is non-zero due to the presence of a small (that amount to a lower-order term) of entries of non-zero expectation. See the proof of Lemma 5.7 for details.} and 
    \item $\max_{i \leq n} \Norm{B_i \otimes B_i} = \norm{B_i}^2 \leq O(1)$ by observing that a pair of variables can appear in at most $O(1)$ constraints with high probability. 
\end{enumerate}
We now want to apply the following standard matrix concentration inequality to upper-bound $\Norm{\sum_{i\leq n} B_i \otimes B_i}$.

\begin{fact}[Matrix Bernstein Inequality, \cite{Tropp2012} Theorem 1.4]
\label{fact:matrix-bernstein-symmetric}
Let $Z_1, Z_2, \ldots, Z_k$ be mean $0$, symmetric $d \times d$ matrices such that $\Norm{Z_i} \leq R$ almost surely. Then, for all $t \geq 0$, 
\[
\Pr\bigl[ \ \Norm{\sum_{i \leq k} Z_i} \geq t \  \bigr]\leq 2d\exp(-t^2/2(\sigma^2 + Rt/3))\mcom
\]
where $\sigma^2 = \Norm{\sum_{i \leq k} \E Z_i^2}$ is the norm of the matrix variance of the sum.
\end{fact}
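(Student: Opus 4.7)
The plan is to follow the standard Laplace transform (matrix moment generating function) approach, which is the route Tropp uses in \cite{Tropp2012}. The starting point is the observation that, since $\Norm{S} = \max(\lambda_{\max}(S), \lambda_{\max}(-S))$ for $S = \sum_i Z_i$, it suffices to obtain a one-sided tail bound on $\lambda_{\max}(S)$ and apply the same bound to $-S$, losing only a factor of $2$. For the one-sided bound, I would use the matrix Markov inequality: for any $\theta > 0$,
\[
\Pr\bigl[\lambda_{\max}(S) \geq t\bigr] \leq e^{-\theta t}\, \E \tr \exp(\theta S)\mper
\]
The entire work is then in controlling $\E \tr \exp(\theta S)$, after which one optimizes $\theta$.

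The central step is the sub-additivity of the matrix cumulant generating function. The naive inequality $e^{\theta \sum_i Z_i} \neq \prod_i e^{\theta Z_i}$ fails since the $Z_i$'s need not commute, so I cannot simply factor the expectation. Instead, I would invoke Lieb's concavity theorem, which states that $A \mapsto \tr \exp(H + \log A)$ is concave on the positive definite cone for any fixed Hermitian $H$. Applying Lieb's theorem iteratively together with Jensen's inequality (peeling off one summand at a time and pulling expectations inside the trace-exp) yields
\[
\E \tr \exp\Bigl(\theta \sum_{i} Z_i\Bigr) \leq \tr \exp\Bigl(\sum_{i} \log \E e^{\theta Z_i}\Bigr)\mper
\]
This reduces the problem from a joint expectation over all $Z_i$'s to bounding the individual matrix moment generating functions $\E e^{\theta Z_i}$.

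Next, I would bound $\E e^{\theta Z_i}$ in the Löwner order using the assumption $\Norm{Z_i} \le R$ and $\E Z_i = 0$. Writing the scalar inequality $e^x \le 1 + x + \frac{x^2/2}{1 - |x|/3}$ for $|x| \le R\theta < 3$ (a standard Bernstein-type Taylor estimate) and lifting it to matrices via the spectral theorem gives $e^{\theta Z_i} \preceq I + \theta Z_i + g(\theta)\, Z_i^2$ with $g(\theta) = \tfrac{\theta^2/2}{1 - R\theta/3}$. Taking expectation (using $\E Z_i = 0$) and then the operator monotonicity of the logarithm on psd matrices, plus $\log(I + X) \preceq X$, gives $\log \E e^{\theta Z_i} \preceq g(\theta)\, \E Z_i^2$. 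Summing and using that the trace of an exponential is bounded by $d$ times its operator exponential of the largest eigenvalue, I obtain
\[
\E \tr \exp(\theta S) \leq d \cdot \exp\bigl(g(\theta)\, \sigma^2\bigr)\mper
\]
Combining with the Markov bound produces $\Pr[\lambda_{\max}(S) \geq t] \leq d \exp\bigl(g(\theta)\sigma^2 - \theta t\bigr)$.

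Finally, I would optimize over $\theta$. Setting $\theta = t/(\sigma^2 + Rt/3)$ (the choice that balances the two terms in the exponent in the Bernstein regime) yields the exponent $-t^2/\bigl(2(\sigma^2 + Rt/3)\bigr)$. The factor of $2$ and the $\lambda_{\max}(-S)$ contribution combine to give the claimed bound $2d\exp\bigl(-t^2/(2(\sigma^2 + Rt/3))\bigr)$. The main technical obstacle is the sub-additivity step, which genuinely requires Lieb's theorem (or equivalent deep input such as the Golden--Thompson inequality, which gives a slightly weaker bound); the rest is a careful but mechanical combination of scalar Bernstein-type estimates lifted to the Löwner order. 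I would take Lieb's theorem as a black box in the write-up, as it is well-known and its proof is a separate matter in operator theory.
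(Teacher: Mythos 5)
This fact is quoted verbatim from Tropp (Theorem 1.4 of \cite{Tropp2012}); the paper itself offers no proof and simply imports it. Your sketch---the matrix Laplace transform bound $\Pr[\lambda_{\max}(S)\geq t]\leq e^{-\theta t}\,\E \Tr \exp(\theta S)$, subadditivity of the matrix cumulant generating function via Lieb's concavity theorem, the lifted Bernstein estimate $e^{\theta Z_i}\preceq I+\theta Z_i+g(\theta)Z_i^2$ with $g(\theta)=\tfrac{\theta^2/2}{1-R\theta/3}$, and the choice $\theta=t/(\sigma^2+Rt/3)$, with the factor $2$ from treating $\pm S$---is precisely the argument in the cited source and is correct as stated.
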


To apply this inequality, we need to estimate the ``variance parameter'' $\sigma^2$. 
This is where we do need to do a simple but very helpful combinatorial observation. 

Let $\deg_i(v)$ be the number of constraints in $\phi$ that are colored by $i$ and contain variable $v$. Observe that for a random $3$-XOR instance with $m \ll n^2$, removing at most $o(m)$ constraints (or equivalently, losing an additional $o(1)$ in the certified upper-bound) ensures that every pair of variables appears in at most $1$ constraint in $\phi$. As a result, $\deg_i(v) \in \{0,1\}$ for every $i$ and $v$. 

Thus, each row of $B_i$ can have at most $1$ non-zero entry and as a result, the matrix $(B_i \otimes B_i)(B_i \otimes B_i)^{\top} = (B_i \otimes B_i)^2$ is a \textbf{diagonal} matrix with $(v,v')$ entry given by $\deg_i(v)\deg_i(v')$. Thus, the diagonal entry at say $(v,v')$ of the matrix $\sum_i (B_i \otimes B_i)^2$ equals $\sum_{i \leq n} \deg_i(v)\deg_i(v') \leq 1$! Applying Fact~\ref{fact:matrix-bernstein-symmetric} immediately yields that with probability at least $0.99$, $\Norm{\sum_i B_i \otimes B_i} \leq O(\log n)$. 

This immediately allows us to certify with probability at least $0.99$, a bound of $O(n^3 \log n/m^2) \ll \epsilon^2$ on $\biasval_{\phi}^2$ whenever $m \gg n^{3/2}\log n/\epsilon$. That finishes the proof!

\paragraph{Key lesson.} Here's the key observation in our analysis of the random case above: the matrices $B_i \otimes B_i$ are independent even when the constraint graph $H$ is worst-case! Thus, as long as we can control the ``variance parameter'' $\sigma^2$ - we can repeat the analysis from the random case above. 

\subsection{Semi-random refutation}
We now turn to describing our high level approach to accomplish the main goal in this paper, refuting semi-random instances of $k$-XOR.

\subsubsection{Springboarding off the random case}
Our approach for semi-random refutation is directly inspired from the above argument. First, we move to a variant of $3$-XOR that we call partitioned $2$-XOR via the following reasoning: we first ``color'' each constraint $C_i = (a,b,c) \in E$ by one of the three variables in it arbitrarily, say $a$. We then collect all the constraints of color $a$ in a group $G_a$. Note that $G_a$ is a graph of pairs of variables. Observe, then, that the following relationship immediately holds:
\[
\biasval_{\phi} = \max_{x \in \on^n}  \frac{1}{m}\sum_{i \leq m} r_i x_{C_i} \leq \max_{x,y \in \on^n} \frac{1}{m} \sum_{a \in [n]} \sum_{ C_j \setminus a \in G_a} r_j y_a x_{b,c}  \leq \max_{x \in \on^n} \frac{1}{m} \sum_{a \in [n]} \Abs{\sum_{ C_j \setminus a \in G_a} r_j x_{b,c}}\mper
\]

Note that the key difference from the usual $2$-XOR problem is that in the instance above,  we are interested in maximizing the sum of the \emph{absolute values} of the subinstance in each group. Such an instance is an example of the more general \emph{partitioned} $2$-XOR instance---all our algorithms apply at this level of generality---where there are multiple groups of $2$-XOR instances and we are interested in maximizing the $\ell_1$-norm of the value of the subinstances within each color group via some Boolean assignment.

To proceed, we identify the central feature of the partitioned $2$-XOR instance produced from a random $3$-XOR instance that led to the success of the above proof strategy: the \emph{butterfly degree} of a pair of variables (defined below). We use it to define the class of \emph{pseudorandom} constraint graphs $H$ as those with bounded butterfly degree. Observe that even though $\sigma$ depends on both the constraint graph and the ``right hand sides,'' the butterfly degree we define below (and used in the proof for the random case above) is a function only of the constraint graph.

\begin{definition}[Butterfly Degree and Pseudorandom Constraint graphs]
For any $v$, let $\deg_i(v)$ denote the number of constraints in the $i$-th $2$-XOR instance (in the input partitioned $2$-XOR instance) that include the variable $v$. For a pair of variables $v,v'$ the butterfly degree of $v,v'$ is $\frac{1}{n} \sum_{i \leq n} \deg_i(v) \deg_i(v')$. See Figure~\ref{fig:butterfly} for a visual depiction of this quantity.
A partitioned $2$-XOR instance is said to be \emph{pseudorandom} if the butterfly degree of every pair of variables is at most $O(1/\epsilon^2)$. 
\end{definition}

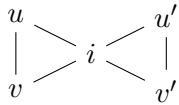
\begin{figure}[h]\label{butterfly}
\begin{center}
\begin{tikzpicture}
    \node[draw=none] (vp) at (1,-0.5) {$v'$};
    \node[draw=none] (up) at (1,0.5) {$u'$};
    \node[draw=none] (v) at (-1,-0.5) {$v$};
    \node[draw=none] (u) at (-1,0.5) {$u$};
    \node[draw=none] (i) at (0,0) {$i$};

    \draw (v) edge (u);
    \draw (vp) edge (up);
    \draw (v) edge (i);
    \draw (vp) edge (i);
    \draw (u) edge (i);
    \draw (up) edge (i);
\end{tikzpicture}
\caption{The butterfly degree counts diagrams of this form for fixed $v,v'$, where $(i,v,u)$ and $(i,v',u')$ each specify a unique constraint (e.g.\ $x_v x_u = y_i$).}
\label{fig:butterfly}
\end{center}
\end{figure}
A straightforward generalization of the argument from the previous subsection gives a refutation algorithm for all pseudorandom instances.

\subsubsection{Decomposition into bipartite $2$-XOR and structured $4$-XOR} 
When a partitioned $2$-XOR instance is \emph{not pseudorandom}, we observe that there must be $\deg_i(v)$ that are larger than the average for some $i,v$. Our main idea here is to partition the input instance into two parts: one where all $\deg_i(v)$'s are $O(1/\epsilon^2)$ and the other where some $\deg_i(v)$ is large. 

For the second part, we linearize to obtain a \emph{bipartite} $2$-XOR instance with one side containing $n$ variables (corresponding to $y$ above) and the other side containing $n^2$ variables. We then show this bipartite derived instance preserves the original value and thus, via the analysis based on the Grothendieck inequality above, we can use an SDP to efficiently refute this instance. 

The first part requires more work and is refuted by a carefully chosen sequence of spectral upper-bounds that utilize the Booleanity of the assignment to prove progressively tighter upper-bounds on the effective $\ell_2$ norm of the assignment. First, we use a carefully chosen weight for each of the $2$-XOR subinstances and then apply the Cauchy-Schwarz inequality. Intuitively, this is to ensure that the \emph{Cauchy-Schwarz} trick above is applied in the ``tightest'' possible parameter regime. These weights turn out to be almost equal in a partitioned $2$-XOR instance generated from a random $3$-XOR instance, up to a scaling factor. Thus, our matrix is essentially the same as the one employed above (and in~\cite{BM15}) when specialized to the random $3$-XOR setting but is, in general, different depending on the constraint graphs $H$.  This weighting does lead to an additional step of arguing that the value of the weighted instance can be related to the value of the unweighted instance. 

Next, we consider the matrix whose quadratic form captures the value of the above weighted $4$-XOR instance produced by the Cauchy-Schwarz trick. We decompose this matrix further into $\poly \log n$ pieces so as to populate rows with similar (up to constants) $\ell_2$ norm in piece. Our strategy is to use a spectral norm upper bound on each of this piece - however, a naive spectral norm bound does not suffice and is too large. Our key observation is that the upper-bound on the value of CSP instance so produced must be smaller than this estimate because the \emph{effective} $\ell_2$ norm of the assignment vector must be lower. This is because we can argue an appropriate geometrically decreasing upper-bound on the row-sparsity of the matrix in each piece of this decomposition.

\section{Preliminaries}
Below we present some notation we will use throughout this paper.
\begin{itemize}
\item For $n \in \mathbb{Z}_{\geq 1}$, define $[n] = \{1,2\dots, n\}$.
    \item A vector $v \in \mathbb{R}^n$ is \emph{Boolean} if all of its entries are either 1 or -1.
    \item The spectral norm $\|M\|$ is the maximum value of $x^\top M y$ over all unit vectors $x$ and $y$, or equivalently the maximum eigenvalue or singular value of $M$.
    \item The ``infinity to 1'' norm $\pmn{M}$ is the maximum value of $x^\top M y$ over all Boolean vectors $x$ and $y$.
    \item A multiset $S$ is a set which allows repeated elements.
    We think of identical multiset elements as ordered and distinguishable, so a function $f:S \to X$ may take different values on different copies of an element in $S$.
    Similarly, when performing an iterative operation over $S$ (for example, $\sum_{s \in S} f(s)$) we iterate over each copy of an element individually.
    If $S,T$ are multisets then $S+T$ is the disjoint union of their contents, so in particular $|S+T| = |S|+|T|$.
\end{itemize}

A \emph{$k$-XOR} instance $\phi = (V,E,r)$ is specified by a variable set $V$, a constraint \emph{multiset} $E$ of elements in $\binom{V}{k}$, and a sign function $r:E \to \{\pm 1\}$.
We call the instance \emph{semi-random} if the edge multiset can be arbitrarily chosen but $r$ is chosen uniformly at random from all functions $E \to \{\pm 1\}$ .
An assignment to $\psi$ is a vectors $x \in \{\pm 1\}^V$.
We say that a constraint $e \in E$ is satisfied when $x^e r(e) = 1$.
Define $\val_\phi(x)$ as the maximum fraction of satisfied constraints with assignment $x$, and define $\val_\phi = \max_{x}\val_\phi(x)$.

A \emph{partitioned 2-XOR} instance $\psi = (V,\ell,E,r)$ is specified by a variable set $V$, an integer $\ell$ denoting the number of parts, a constraint multiset $E$ of elements in $[\ell] \times \binom{T}{2}$, and a sign function $r:E \to \{\pm 1\}$.
We often write $r_i(e) = r(i,e)$.
We call the instance \emph{semi-random} if $E$ is arbitr ary and $r$ is chosen uniformly at random from all functions $E \to \{\pm 1\}$.
An assignment to $\psi$ is two vectors $x \in \{\pm 1\}^V$ and $y \in \{\pm 1\}^\ell$.
We say that a constraint $(i,e) \in E$ is satisfied by $(x,y)$ when $x^e y_i r_i(e) = 1$.
The set $E$ encodes a partition of a multiset of edges in $\binom{V}{2}$, where the $i$-th part may be realized explicitly as $T_i = \{e \in \binom{V}{2} \mid (i,e) \in E\}$.
Define $\val_\psi(x,y)$ as the maximum fraction of satisfied constraints with assignment $(x,y)$, and define 
\[ \val_\psi = \max_{x,y}\val_\psi (x,y) \ . \]
We now present some known results which will our proofs will rely on.
The first bound is a loose, but useful fact.
\begin{lemma}\label{gersh_rect}
	Let $X \in \mathbb{R}^{a \times b}$.
	Then the spectral norm of $X$ satisfies
	\[\|X\| \leq \max\left\{\max_{1 \leq i \leq a} \sum_{j=1}^b |X_{i,j}|, \max_{1 \leq j \leq b} \sum_{i=1}^a |X_{i,j}|\right\}.\]
	Equivalently, $\|X\|$ is upper bounded by the maximum $\ell_1$ norm of any row or column of $X$.
\end{lemma}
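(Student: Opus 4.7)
The plan is to bound $\|X\|$ directly via the variational characterization $\|X\| = \sup\{x^{\top} X y : \|x\|_2 = \|y\|_2 = 1\}$ together with a single application of Cauchy--Schwarz. This is essentially the Schur test for operator norms, and it gives the stronger intermediate bound $\|X\| \leq \sqrt{R \cdot C}$, where $R = \max_i \sum_j |X_{i,j}|$ is the maximum $\ell_1$ row norm and $C = \max_j \sum_i |X_{i,j}|$ is the maximum $\ell_1$ column norm. The conclusion of the lemma then follows from the trivial inequality $\sqrt{R C} \leq \max(R, C)$.

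Concretely, I would fix unit vectors $x \in \mathbb{R}^a$ and $y \in \mathbb{R}^b$ and split each entry as $X_{i,j} = \sqrt{|X_{i,j}|} \cdot \sqrt{|X_{i,j}|}\,\mathrm{sign}(X_{i,j})$. Writing
\[
x^{\top} X y \;=\; \sum_{i,j} \bigl(\sqrt{|X_{i,j}|}\, x_i\bigr)\bigl(\sqrt{|X_{i,j}|}\,\mathrm{sign}(X_{i,j})\, y_j\bigr),
\]
Cauchy--Schwarz immediately yields
\[
(x^{\top} X y)^2 \;\leq\; \Bigl(\sum_{i,j} |X_{i,j}|\, x_i^2\Bigr)\Bigl(\sum_{i,j} |X_{i,j}|\, y_j^2\Bigr) \;=\; \Bigl(\sum_i x_i^2 \sum_j |X_{i,j}|\Bigr)\Bigl(\sum_j y_j^2 \sum_i |X_{i,j}|\Bigr).
\]
Bounding the inner row/column sums by $R$ and $C$ respectively and using $\|x\|_2 = \|y\|_2 = 1$ gives $(x^{\top} X y)^2 \leq R \cdot C$, hence $\|X\| \leq \sqrt{R C}$.

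Finally, since $\sqrt{R C} \leq \sqrt{\max(R,C)^2} = \max(R, C)$, this proves the stated bound. There is no serious obstacle here; the only subtlety is choosing the right weighting inside Cauchy--Schwarz so that the row-sum factor and column-sum factor separate cleanly. I would also remark briefly (after the proof) that the tighter intermediate bound $\|X\| \leq \sqrt{RC}$ is sometimes useful, though the statement as given suffices for the applications in this paper.
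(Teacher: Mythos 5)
Your proof is correct, but it takes a genuinely different route from the paper. The paper proves the lemma as a one-line corollary of the Gershgorin circle theorem applied to the Hermitian dilation $\bigl(\begin{smallmatrix}0 & X \\ X^{\top} & 0\end{smallmatrix}\bigr)$: this block matrix is symmetric, its spectral norm equals $\|X\|$, and each Gershgorin disc is centered at $0$ with radius equal to the $\ell_1$ norm of a row or column of $X$, which yields the bound $\max(R,C)$ immediately. You instead run the Schur test: the variational characterization of $\|X\|$ plus a weighted Cauchy--Schwarz with the splitting $X_{i,j} = \sqrt{|X_{i,j}|}\cdot\sqrt{|X_{i,j}|}\,\mathrm{sign}(X_{i,j})$, which cleanly separates the row-sum and column-sum factors and gives the sharper intermediate bound $\|X\| \leq \sqrt{RC}$ before relaxing to $\max(R,C)$. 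What each buys: the paper's argument is shorter and leans on a standard named theorem; yours is self-contained, avoids passing to the dilation, and yields the geometric-mean bound $\sqrt{RC}$, which is strictly stronger whenever $R \neq C$ (e.g.\ it would give $d\sqrt{\alpha\beta^{(j+k)/2}}$ rather than $d\sqrt{\alpha\beta^{\max(j,k)}}$ in Lemma~\ref{2xor_bibound}, though the paper's applications do not need this refinement). Both arguments are elementary and fully rigorous.
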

\begin{proof}
	This is a corollary of the Gershgorin circle theorem applied to the block matrix
	$\begin{bsmallmatrix}0 & X \\ X^\top & 0\end{bsmallmatrix}$,
	since the spectral norm of this block matrix equals the the spectral norm of $X$.
\end{proof}

The most important inequality underlying our work is the matrix Bernstein inequality.
The matrix Bernstein inequality is useful for bounding the spectral norm of random matrices which can be easily decomposed as a sum of matrices with uniformly bounded spectral norms.

\begin{theorem}[Rectangular Matrix Bernstein, \cite{Tropp2012} Theorem 1.6]\label{mbernstein_rect}
	Consider a finite sequence $\{Z_k\}$ of independent, random, matrices with common dimension $d_1 \times d_2$. Assume that each random matrix satisfies
	\[\E Z_k = 0\text{ and } \|Z_k\| \leq R\]
	almost surely.
	Define
	\[\sigma^2 = \max\left\{\left\|\sum_{k} \E[Z_k Z_k^\ast]\right\|, \left\|\sum_{k} \E[Z_k^\ast Z_k]\right\|\right\}\]
	Then, for all $t \geq 0$,
	\[\Psymb\bigl[ \Norm{\sum_{k} Z_k} \geq t\bigr] \leq (d_1 + d_2) \exp\left( \frac{-t^2/2}{\sigma^2 + Rt/3}\right).\]
\end{theorem}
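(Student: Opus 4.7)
The plan is to reduce the rectangular case to the Hermitian (symmetric) case via a standard dilation trick, and then prove the Hermitian matrix Bernstein bound using the matrix Laplace transform method combined with Lieb's concavity theorem. This is the classical approach and I do not expect substantive novelty; the main subtlety lies in controlling the matrix moment generating function for bounded random matrices so as to get the exact $\sigma^2 + Rt/3$ form in the denominator.

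First, for each summand $Z_k$ of dimension $d_1 \times d_2$, form the Hermitian dilation
\[
\tilde{Z}_k \;=\; \begin{pmatrix} 0 & Z_k \\ Z_k^\ast & 0 \end{pmatrix} \in \mathbb{R}^{(d_1+d_2)\times(d_1+d_2)}.
\]
The $\tilde{Z}_k$ are independent, mean zero, and satisfy $\|\tilde{Z}_k\| = \|Z_k\| \le R$. Moreover $\lambda_{\max}(\sum_k \tilde{Z}_k) = \|\sum_k Z_k\|$, and a direct block computation gives
\[
\tilde{Z}_k^2 \;=\; \begin{pmatrix} Z_k Z_k^\ast & 0 \\ 0 & Z_k^\ast Z_k \end{pmatrix},
\]
so $\|\sum_k \E \tilde{Z}_k^2\| = \max\{\|\sum_k \E Z_k Z_k^\ast\|,\|\sum_k \E Z_k^\ast Z_k\|\} = \sigma^2$. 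Thus it suffices to prove the Hermitian version with this $\sigma^2$ and dimension $d_1+d_2$.

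Second, apply the matrix Laplace transform bound: for any $\theta > 0$,
\[
\Pr\!\Bigl[\lambda_{\max}\Bigl(\sum_k \tilde{Z}_k\Bigr) \ge t\Bigr] \;\le\; \inf_{\theta>0}\; e^{-\theta t}\;\E\,\tr \exp\!\Bigl(\theta \sum_k \tilde{Z}_k\Bigr).
\]
The key step is to bound the trace MGF on the right. Using Lieb's concavity theorem (equivalently, the subadditivity of the log-trace exponential for independent Hermitian summands), one obtains
\[
\E\,\tr \exp\!\Bigl(\theta \sum_k \tilde{Z}_k\Bigr) \;\le\; \tr \exp\!\Bigl(\sum_k \log \E\, e^{\theta \tilde{Z}_k}\Bigr).
\]
This is the main technical step and the one I expect to be most delicate to set up cleanly; everything after it is essentially optimization and scalar analysis.

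Third, bound the individual matrix MGFs. Using the inequality $e^x \le 1 + x + \tfrac{x^2/2}{1 - |x|/3}$ for $|x| < 3$ together with monotonicity of the matrix logarithm on positive operators, one shows that for $0 < \theta < 3/R$,
\[
\log \E\, e^{\theta \tilde{Z}_k} \;\preceq\; \frac{\theta^2/2}{1-\theta R/3}\,\E\,\tilde{Z}_k^2 .
\]
Summing and using the operator-monotonicity of $\mathrm{tr}\exp$, I get $\tr \exp(\sum_k \log \E e^{\theta \tilde Z_k}) \le (d_1+d_2)\exp\!\bigl(\tfrac{\theta^2 \sigma^2/2}{1-\theta R/3}\bigr)$. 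Plugging back and optimizing the Markov bound by choosing $\theta = t/(\sigma^2 + Rt/3)$ yields exactly
\[
\Pr\!\Bigl[\Bignorm{\sum_k Z_k} \ge t\Bigr] \;\le\; (d_1+d_2)\exp\!\Bigl(\frac{-t^2/2}{\sigma^2 + Rt/3}\Bigr),
\]
which is the claimed inequality. The only non-routine step is the Lieb-based subadditivity for the trace exponential; the rest is bookkeeping and a standard scalar optimization.
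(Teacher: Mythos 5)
Your proposal is correct. Note, however, that the paper does not prove this statement at all: it is quoted verbatim, with citation, as Theorem 1.6 of \cite{Tropp2012}, and is used as a black box. Your argument --- Hermitian dilation $\tilde{Z}_k$, the matrix Laplace transform bound, Lieb/subadditivity of the trace exponential for independent summands, the Bernstein-type MGF bound $\log \E e^{\theta \tilde{Z}_k} \preceq \frac{\theta^2/2}{1-\theta R/3}\,\E \tilde{Z}_k^2$ for $0<\theta<3/R$, and the choice $\theta = t/(\sigma^2 + Rt/3)$ --- is exactly the standard proof from that reference (and the chosen $\theta$ indeed satisfies $\theta < 3/R$, so the restriction is consistent), so it faithfully reconstructs the cited result rather than offering a genuinely different route.
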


 The following lemma is a standard result, relying on semidefinite programming and the Grothendieck inequality to bound $\pmn{M}$.
 
\begin{lemma}\label{grothendieck_sdp}
    Given a matrix $M$, one can efficiently approximate $\pmn{M}$ to within a constant factor $\leq 1.8$ of its true value.
\end{lemma}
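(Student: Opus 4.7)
The plan is to write down the natural vector-valued relaxation of $\pmn{M} = \max_{x,y \in \{\pm 1\}^a \times \{\pm 1\}^b} x^\top M y$, solve it efficiently as a semidefinite program, and then invoke Grothendieck's inequality to control the integrality gap.

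Concretely, I would replace each Boolean coordinate $x_i$ and $y_j$ by a unit vector $u_i, v_j$ in $\mathbb{R}^{a+b}$ and consider the program $\mathrm{SDP}(M) := \max \sum_{i,j} M_{ij} \langle u_i, v_j \rangle$ subject to $\|u_i\|_2 = \|v_j\|_2 = 1$. Collecting the inner products into a Gram matrix $Z \succeq 0$ with unit diagonal and bipartite block structure turns this into a standard semidefinite program, so $\mathrm{SDP}(M)$ can be computed to any prescribed additive accuracy $\epsilon$ in time $\poly(a, b, \log(1/\epsilon))$ via the ellipsoid method or interior point algorithms.

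Substituting $u_i = x_i e$ and $v_j = y_j e$ for any fixed unit vector $e$ shows that every Boolean assignment is feasible for this SDP, so $\pmn{M} \leq \mathrm{SDP}(M)$. In the reverse direction, Grothendieck's inequality asserts a universal constant $K_G$ such that $\mathrm{SDP}(M) \leq K_G \cdot \pmn{M}$ for every real matrix $M$; Krivine's explicit bound gives $K_G \leq \pi / (2 \ln(1 + \sqrt{2})) < 1.783$, which is below the $1.8$ promised by the lemma. Therefore, outputting $\mathrm{SDP}(M)$ to sufficient precision yields an efficient approximation of $\pmn{M}$ within the claimed factor.

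The main obstacle here is really just selecting the right SDP formulation and citing the sharpest known bound on the real Grothendieck constant; the algorithmic piece is immediate from the polynomial-time solvability of SDPs, and the approximation ratio is precisely the content of Grothendieck's inequality in its quantitative (Krivine) form.
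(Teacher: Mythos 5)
Your proposal is correct and follows essentially the same route as the paper: relax $\pmn{M}$ to the vector-valued program over unit vectors, solve it as an SDP, and bound the gap by Grothendieck's inequality with Krivine's constant $K_G \leq \pi/(2\ln(1+\sqrt{2})) < 1.8$. The only cosmetic difference is that the paper first passes to the box relaxation $|x_i|,|y_j|\leq 1$ before invoking Grothendieck, which changes nothing of substance.
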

\begin{proof}
    First note that the maximum value of $x^\top M y$ over Boolean vectors $x$ and $y$ exactly the maximum of $x^\top M y$ over vectors $x$ and $y$ with entries of magnitude at most 1.
    The Grothendieck inequality \cite{Grothendieck53} states that 
    \[\max_{\|u_i\|,\|v_j\| \leq 1} \sum_{i,j} M_{i,j} \langle u_i,v_j \rangle \leq K_G\max_{|x_i|,|y_j| \leq 1} \sum_{i,j} M_{i,j} x_i y_j\]
    where the left maximization is taken over vectors $u_1, \dots, u_n, v_1, \dots, v_n \in \mathbb{R}^d$ for some $d$, the right maximization is taken over numbers $x_1, \dots x_n, y_1, \dots, y_n \in \mathbb{R}$, and $K_G$ is an absolute constant independent of $M$, $n$, or $d$. It is known that $K_G \le \frac{\pi}{2 \ln (1+\sqrt{2})} < 1.8$~\cite{Krivine79}.
    The left side can be formulated and solved as a semidefinite program when $d = 2n$, and is clearly a relaxation of the case when $d=1$ which is what we aim to bound.
    The Grothendieck inequality shows that the upper bound obtained from this relaxation cannot be more than a factor of $K_G$ larger than $\pmn{M}$, so the solution to the SDP serves as a constant-factor approximation.
\end{proof}

\section{Reduction to Partitioned 2-XOR}
Our main technical result is the following theorem about refutation of semi-random partitioned 2-XOR.

\begin{theorem}[Semi-random Partitioned 2-XOR Refutation]\label{2xor_partitioned}
	Let $0 < \varepsilon < 1/2$.
	Let $\psi = (V,\ell,E,r)$ be a semi-random partitioned 2-XOR instance in which $|V| = n$ and $|E| = m$.
	Suppose $\ell = O(\operatorname{poly}(n))$.
	Then there is an absolute constant $K$ such that if
	\[m \geq \max\left\{\frac{Kn\sqrt{\ell} (\log n)^3}{\varepsilon^{5}}, \frac{\ell}{\varepsilon^2}\right\}\]
	we can with high probability efficiently certify that
	\[\val_\psi \leq 1/2 + \varepsilon.\]
\end{theorem}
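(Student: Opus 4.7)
Since $\val_\psi(x,y) = \tfrac12 + \tfrac{1}{2m}\sum_{(i,e)\in E} r_i(e)\,y_i\,x^e$, it suffices to certify that the normalized bias $\tfrac{1}{m}\max_{x,y}\bigl|\sum_{i=1}^\ell y_i\,x^\top M_i x\bigr| \leq 2\varepsilon$, where $M_i$ is the symmetric $V\times V$ matrix with $M_i(u,v) = r_i(\{u,v\})/2$ on $T_i$ and $0$ elsewhere. Writing $\deg_i(v) = |\{e\in T_i : v\in e\}|$, I fix a threshold $D = \Theta((\log n)/\varepsilon^2)$ and iteratively peel off any pair $(i,v)$ with $\deg_i(v) > D$, moving all constraints in $T_i$ incident to $v$ into a ``heavy'' subinstance $\psi_H$ and leaving a ``pseudorandom'' $\psi_L$ with $\deg_i(v)\leq D$ everywhere. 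By linearity of bias and the triangle inequality, it is enough to certify an $\varepsilon m$ bound on each subinstance separately.

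\paragraph{Heavy part via linearization and Grothendieck.}
For $\psi_H$ I linearize each surviving constraint $x_u x_v y_i r_i(\{u,v\}) = 1$ by introducing fresh Boolean variables $w_{(i,v)} := y_i x_v$ for flagged pairs, so the constraint becomes the bipartite 2-XOR $x_u w_{(i,v)} r_i(\{u,v\}) = 1$. The maximum bias over the original, correlated $(x,y)$ is bounded by the $\infty\to 1$ norm of the resulting $n\times N_H$ bipartite matrix $A$ (where $N_H \leq m_H/D$ since each peel removes at least $D$ constraints). Then \cref{grothendieck_sdp} gives an efficient constant-factor approximation of $\|A\|_{\infty\to 1}$ via SDP. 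On the random side, Hoeffding applied to the bilinear form $x^\top A w$ (a sum of $m_H$ independent $\pm 1/2$ signs with fixed polarities) together with a union bound over $2^{n+N_H}$ Boolean pairs gives a bound of the form $\|A\|_{\infty\to 1} \leq O(\sqrt{m_H(n+N_H)})$ with high probability; under the hypothesis $m\geq \ell/\varepsilon^2$ this is below $\varepsilon m$.

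\paragraph{Pseudorandom part via Cauchy--Schwarz, bucketing, and matrix Bernstein.}
For $\psi_L$, apply Cauchy--Schwarz in the $y_i$'s:
\[
\Bigl|\sum_i y_i\,x^\top M_i x\Bigr|^2 \;\leq\; \ell\sum_i (x^\top M_i x)^2 \;=\; \ell\cdot (x^{\otimes 2})^\top B\,x^{\otimes 2}, \qquad B := \sum_{i=1}^\ell M_i\otimes M_i.
\]
A blanket $n^2\|B\|$ bound is too loose. Instead, decompose $B = \sum_t B_t$ into $O(\log n)$ dyadic pieces bucketed by row $\ell_2$-norm; rows of large $\ell_2$-norm are rare, so the Boolean vector $x^{\otimes 2}$ restricted to the active rows of $B_t$ has effective squared $\ell_2$-mass shrinking geometrically across buckets. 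Each $\|B_t\|$ is then bounded via matrix Bernstein (\cref{mbernstein_rect}) applied to the independent centered summands $M_i\otimes M_i - \E\bigl[M_i\otimes M_i\bigr]$: the uniform degree bound $\deg_i(v)\leq D$ gives a per-summand operator-norm bound of $D^2$, while the variance parameter $\bigl\|\sum_i \E\bigl[(M_i\otimes M_i)^2\bigr]\bigr\|$ is controlled through the ``butterfly'' quantity $\sum_i \deg_i(u)\deg_i(u')$, itself dominated by $D\cdot\sum_i\sum_v \deg_i(v) = 2Dm$. Summing over the buckets yields a spectral bound tight enough that, after substituting back into the Cauchy--Schwarz inequality, the light bias is $\leq \varepsilon m$ exactly under $m \geq Kn\sqrt\ell(\log n)^3/\varepsilon^5$.

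\paragraph{Main obstacle.}
The delicate step is the pseudorandom analysis. A uniform $n^2\|B\|$ bound loses a factor of $n$ over what the theorem affords, so the bucket-by-bucket argument---exploiting that a Boolean vector restricted to a column-sparse submatrix has squared $\ell_2$-mass equal only to that support---is essential. Getting exactly three $\log n$ factors in the final bound (one from Bernstein's logarithm, one from the $O(\log n)$ dyadic buckets, and one hidden inside the choice of $D$) without accidentally compounding them is the core technical work. A secondary subtlety is that when group sizes $|T_i|$ vary widely, one must use a \emph{weighted} Cauchy--Schwarz with weights $\propto\sqrt{|T_i|}$ to avoid an extra factor of $\ell/m$; this modifies $B$ accordingly and requires re-deriving the variance bound. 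Finally, the threshold $D$ must be chosen so that the heavy SDP bound (tight under $m\geq \ell/\varepsilon^2$) and the light spectral bound (tight under $m\geq n\sqrt\ell(\log n)^3/\varepsilon^5$) meet the two cases of the theorem's $\max$.
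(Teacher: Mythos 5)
Your route is the same as the paper's: peel off pairs $(i,v)$ whose within-part degree exceeds a threshold, refute the peeled (``heavy'') constraints as a bipartite $2$-XOR via the Grothendieck SDP of \cref{grothendieck_sdp} plus a Chernoff/union-bound estimate (this is \cref{semirandom_2xor} and \cref{2xor_refute_e2}), and refute the degree-bounded remainder by a weighted Cauchy--Schwarz in the $y_i$'s, tensoring, a bucketed decomposition, and rectangular matrix Bernstein (\cref{2xor_bounded_partitioned}). However, the bucketing step as you describe it would not deliver the required bound. You bucket only the \emph{rows} of $B$ and only restrict the Boolean vector on the row side; but in the blockwise Bernstein application both the variance parameter of \cref{mbernstein_rect} (the maximum of the two Gram norms) and the per-summand bound (via Gershgorin, the maximum row \emph{or column} $\ell_1$-norm) are then governed by the heaviest columns, whose butterfly mass can be as large as $\Theta(m)$. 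The resulting estimate for the pseudorandom bucket is of order $n\sqrt{|S_0|}\cdot d\sqrt{m}\log n \approx n^2 d\sqrt{m}\log n$, which forces $m \gtrsim n^2$ rather than $m\gtrsim n\sqrt{\ell}$ --- a loss of roughly a factor $n$. The paper therefore partitions \emph{both} the row and the column copy of $V^2$ into butterfly-degree classes and bounds each block by $\sqrt{|S_j|\,|S_k|}\cdot\|M_{j,k}\|$, where both the per-summand bound and the variance bound (\cref{2xor_variancebound}) scale with $\alpha\beta^{\max(j,k)}$; the trade-off between $|S_j|\,|S_k|$ (\cref{2xor_sisize}) and $\|M_{j,k}\|$ is exactly what recovers the $n\sqrt{\ell}$ scaling. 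Relatedly, the buckets must be defined by the deterministic weighted butterfly degree $\gamma(v,v')=\sum_i \deg_i(v)\deg_i(v')/t_i$ (whose total mass is $4m$, \cref{2xor_gammasum}), not by the row $\ell_2$-norms of the random matrix $B$: those norms depend on the signs $r$ (certainly once constraints repeat), and a sign-dependent partition breaks the independence you need to apply \cref{mbernstein_rect} block by block.

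Two smaller points. The two-sided bucketing contributes $(\log n)^2$ blocks, which together with the single $\log n$ from Bernstein already accounts for all three logarithms in the theorem; accordingly the peeling threshold should be $D=\Theta(\varepsilon^{-2})$ with no logarithm (as in the paper), whereas your $D=\Theta((\log n)/\varepsilon^2)$ would leave you at $(\log n)^4$ --- and the logarithm in $D$ buys nothing, since the heavy part only needs $N_H \le O(\varepsilon^2 m_H)$. Also, your claim that the heavy-part bound $O(\sqrt{m_H(n+N_H)})\le\varepsilon m$ follows from $m\ge\ell/\varepsilon^2$ is misattributed: the $N_H$ term is handled by the peeling threshold, while the $n$ term needs $n\le O(\varepsilon^2 m)$, which comes from the first term of the theorem's max; the hypothesis $m\ge\ell/\varepsilon^2$ is instead what absorbs the diagonal term $\sum_i t_i=m$ left over after Cauchy--Schwarz. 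Your flag that the weighted Cauchy--Schwarz (weights proportional to $\sqrt{t_i}$) is necessary when the $t_i$ are unbalanced is correct and is precisely the paper's choice $\Phi(x)=\sum_i t_i^{-1/2}\bigl(\sum_{e\in T_i}x^e r_i(e)\bigr)^2$.
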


As an application of this result, we prove the following theorem about refutation of semi-random $k$-XOR as a corollary, which is just a more detailed version of Theorem~\ref{thm:main-odd-xor}.

\begin{theorem}\label{kxor_refutation}
	Let $0 < \varepsilon < 1/2$.
	Let $\phi = (V, E, r)$ be a semi-random $k$-XOR instance in which $|V| = n$ and $|E| = m$.
	Then there is an absolute constant $K$ such that if
	\[m \geq \frac{Kn^{k/2}(\log n)^3}{\varepsilon^5}\]
	with high probability we can efficiently certify that
	\[\val_\psi \leq 1/2 + \varepsilon.\]
\end{theorem}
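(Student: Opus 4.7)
\textbf{Proof proposal for \cref{kxor_refutation}.} The plan is to reduce a semi-random $k$-XOR instance $\phi$ on $n$ variables directly to a semi-random partitioned 2-XOR instance $\psi$ such that $\val_\phi \leq \val_\psi$, and then invoke \cref{2xor_partitioned}. The key design choice is to split the $k$ variables of each constraint so that the resulting partitioned 2-XOR instance has parameters $(N,\ell)$ satisfying $N\sqrt{\ell} \leq n^{k/2}$; the parity of $k$ is what makes this possible.

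Write $k = 2t$ if $k$ is even and $k = 2t+1$ if $k$ is odd. Fix an arbitrary total order on $V$, and for each constraint $C = \{v_1 < v_2 < \dots < v_k\} \in E$ define a part index $i(C)$ and two $t$-subsets $b(C), c(C)$ of $V$ as follows. For even $k$ set $i(C) = 1$, $b(C) = \{v_1, \dots, v_t\}$, $c(C) = \{v_{t+1}, \dots, v_{2t}\}$. For odd $k$ set $i(C) = v_1$, $b(C) = \{v_2, \dots, v_{t+1}\}$, $c(C) = \{v_{t+2}, \dots, v_{2t+1}\}$. Let $V_\psi = \binom{V}{t}$, let $\ell = 1$ for even $k$ and $\ell = n$ for odd $k$, let $E_\psi$ be the multiset $\{(i(C), \{b(C), c(C)\}) : C \in E\}$, and set $r_\psi(i(C), \{b(C), c(C)\}) := r(C)$. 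Because $b(C)$ and $c(C)$ are disjoint nonempty $t$-subsets of $V$, they are distinct, so each $\{b(C), c(C)\}$ is a valid element of $\binom{V_\psi}{2}$.

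To verify $\val_\phi \leq \val_\psi$, take any $x \in \{\pm 1\}^V$ and define $x' \in \{\pm 1\}^{V_\psi}$ by $x'_S = \prod_{v \in S} x_v$, and $y \in \{\pm 1\}^{\ell}$ by $y_{v} = x_v$ for odd $k$ and $y_1 = 1$ for even $k$. In both cases, $b(C)$ and $c(C)$ together with $\{i(C)\}$ (only for odd $k$) form a disjoint partition of $C$, so $x'_{b(C)} \cdot x'_{c(C)} \cdot y_{i(C)} = \prod_{v \in C} x_v = x^C$. Hence each constraint of $\phi$ is satisfied by $x$ iff the corresponding constraint of $\psi$ is satisfied by $(x', y)$, giving $\val_\phi(x) = \val_\psi(x', y)$ and thus $\val_\phi \leq \val_\psi$. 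The signs $r_\psi$ are obtained from the uniform i.i.d.\ signs $r$ by a one-to-one relabeling of constraints, so $\psi$ is itself a semi-random partitioned 2-XOR instance.

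To finish, set $N := |V_\psi| = \binom{n}{t} \leq n^t$. Then $N\sqrt{\ell} \leq n^{k/2}$ in both parity cases, $\log N = O_k(\log n)$, and $\ell \leq n \leq n^{k/2}$. For a sufficiently large constant $K$ (depending on $k$), the hypothesis $m \geq Kn^{k/2}(\log n)^3/\varepsilon^5$ implies both $m \geq K' N \sqrt{\ell} (\log N)^3/\varepsilon^5$ and $m \geq \ell/\varepsilon^2$ (since for odd $k \geq 3$ we have $\ell = n \leq n^{k/2}$, and for even $k$ trivially $\ell = 1$). Applying \cref{2xor_partitioned} to $\psi$ efficiently certifies $\val_\psi \leq 1/2 + \varepsilon$ with high probability, and therefore $\val_\phi \leq 1/2 + \varepsilon$. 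All the real work is hidden inside \cref{2xor_partitioned}; the only conceptual care needed here is the parity-aware split that keeps $N\sqrt{\ell}$ at $n^{k/2}$ rather than $n^{(k+1)/2}$.
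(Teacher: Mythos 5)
Your proposal is correct and follows essentially the same route as the paper: split each constraint into two $\lfloor k/2\rfloor$-sets (plus a distinguished vertex indexing the part when $k$ is odd), observe the value-preserving map $x'_S=\prod_{v\in S}x_v$, $y=x$, and invoke \cref{2xor_partitioned} with $\ell=1$ (even $k$) or $\ell=n$ (odd $k$) so that $N\sqrt{\ell}\approx n^{k/2}$. Your parameter bookkeeping (checking $m\ge \ell/\varepsilon^2$ and absorbing $\log N = O_k(\log n)$ into the constant) matches what the paper does implicitly.
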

\begin{proof}
	Suppose $k$ is even.
	There is a natural relaxation of $\phi$ to a semi-random 2-XOR instance $\psi = (\binom{V}{k/2}, E', r')$ as follows. For every $k$-set $e \in E$, break it into two $(k/2)$-sets $e_1, e_2$ which union to $e$, and add the edge $(e_1, e_2)$ to $E'$ with sign $r'(e_1,e_2) = r(e)$.
	Relax $\psi$ to a partitioned 2-XOR instance $\psi'$ with a single part.
	Then Theorem~\ref{2xor_partitioned} with $\ell = 1$  and vertex set size $n^{\frac{k}{2}}$ implies $\psi'$ can be $(1/2 + \varepsilon)$-refuted, which also $(1/2 + \varepsilon)$-refutes $\phi$.
	We could alternatively refute semi-random 2-XOR without using the full power of Theorem~\ref{2xor_partitioned}, which we prove later as Theorem~\ref{semirandom_2xor}.

	Suppose $k$ is odd.
	We construct a partitioned 2-XOR instance $\psi = (\binom{V}{(k-1)/2}, \ell, E', r')$ with $\ell = n$ as follows.
	For every $e \in E$, break it arbitrarily into a single vertex $i$, and two disjoint $(k-1)/2$-sets $e_1$ and $e_2$ such that $i \cup e_1 \cup e_2 = e$.
	Let $E'$ be the multiset of all such $(i, (e_1,e_2))$.
	Define $r_i'(e_1,e_2) = r(i \cup e_1 \cup e_2)$.
	Say $x \in \{\pm 1\}^n$ is the optimal assignment for $\phi$.
	Let $x'$ be the $\pm 1$ vector indexed by $(k-1)/2$-sized subsets of $V$ such that $x_{S}' = x^S$, and let $y' = x$.
	Then
	\begin{align*}
	\val_\psi
	&\geq \frac{1}{m}\#\{(i,(e_1,e_2) \in E' \mid x_{e_1}' x_{e_2}' y_i' r_i'(e_1,e_2) = 1\} \\
	&=\frac{1}{m}\#\{(i,(e_1,e_2)) \in E' \mid x^{i \cup e_1 \cup e_2} r(i \cup e_1 \cup e_2) = 1\} \\
	&= \frac{1}{m}\#\{e \in E \mid x^{e} r(e) = 1\} \\
	&= \val_\phi.
	\end{align*}
	Thus it suffices to $(1/2 + \varepsilon)$-refute the partitioned 2-XOR instance $\psi$, which we can do by applying Theorem~\ref{2xor_partitioned} with $\ell = n$ partitions, and vertex set size $|V| = \binom{n}{(k-1)/2}$.
\end{proof}

\section{Refuting Semi-random Partitioned 2-XOR}\label{refuting_semirandom_partitioned_2xor}

The goal of this section is to prove Theorem~\ref{2xor_partitioned}, thus establishing Theorem~\ref{kxor_refutation}. Our algorithm itself can be compactly described as follows:

\begin{mdframed}
  \begin{algorithm}[Refuting Partitioned $k$-XOR] \label{refutation_algorithm}
    \label[algorithm]{alg:refuting-partitioned-k-xor}\mbox{}
    \begin{description}
    \item[Given:]
    A partitioned 2-XOR instance $\psi = (V, \ell, E, r)$ with $n$ variables and $m$ constraints.
    \item[Output:]
      Return either ``Not $(1/2+\varepsilon)$-satisfiable'' or ``Unknown''.
    \item[Operation:]\mbox{}
    \begin{enumerate}
    \item Let $\psi' = (V',E',r')$ be an empty semi-random 2-XOR instance.
    \item While there exists $i \in [\ell]$ and $v \in V$ such that there are at least $\Theta(\varepsilon^{-2})$ constraints of the form $(i,(v,u)) = C \in E$, for each such constraint $C$, set $V := V \cup \{(i,v), u\}$ and $E' := E' + \{((i,v), u)\}$ and remove $e$ from $E$. 
    
    \item Let $\mu_i(u,v)$ be the sum of $r(e)$ over all (possibly zero) duplicates of $e = (i,(u,v))$ in $E$, and let $t_i$ be the number of $e \in E$ with first coordinate $i \in [\ell]$.
    If $|E| \geq \varepsilon m/2$, verify with an SDP that
    \[\max_{\substack{x_{v,v'},y_{u,u'} \in \R^{n^2} \\ \|x_{v,v'}\| = \|y_{u,u'}\| = 1}} \sum_{(v,u) \neq (v',u') \in V^2} \sum_{i=1}^\ell \frac{\mu_i(v,u)\mu_i(v',u') }{\sqrt{t_i}}\langle x_{v,v'}, y_{u,u'}\rangle = O\left(\frac{\varepsilon^2 |E|^{3/2}}{\sqrt{l}}\right)\]
    and if this does not hold, return ``Unknown''.
    \label{refute_e1_step}
    
    \item Let $\mu'((i,v), u)$ be the sum of $r(e)$ over all (possibly zero) duplicates of $e = ((i,v), u)$ in $E'$.
    If $|E'| \geq \varepsilon m/2$, verify with an SDP that
    \[\max_{\substack{x_{i,v}, y_u \in \mathbb{R}^{|V'|} \\ \|x_{i,v}\| = \|y_u\| = 1}} \sum_{(i,v) \in V' \cap ([\ell] \times V)} \sum_{u \in V' \cap V} \mu'((i,v), u) \langle x_{i,v}, y_u \rangle = O(\varepsilon |E'|)\]
    and if this does not hold, return ``Unknown''.
    \label{refute_e2_step}
    
    \item If both verifications passed, return ``Not $(1/2+\varepsilon)$-satisfiable''.
  \end{enumerate}
    \end{description}    
  \end{algorithm}
\end{mdframed}

\subsection{Decomposing Semi-random Partitioned 2-XOR}
For a semi-random partitioned 2-XOR problem $\psi = (V,\ell,E,r)$, let $\deg_i(v)$ be the number of $(i,e) \in E$ such that $v \in E$.
The main idea of the proof of Theorem~\ref{2xor_bounded_partitioned} is to split the constraints of a general semi-random partitioned 2-XOR instance between two instances $\psi_1$ and $\psi_2$. The instance $\psi_2$ will satisfy $\deg_i(v) \leq d = O(\varepsilon^{-2})$ for all $i \in [\ell]$ and all $v \in V$. In the instance $\psi_1$, for every $(i,e) \in E$ is adjacent to some vertex $v$ with degree at least $d$ within its partition.
We use a degree-bounded semi-random partitioned 2-XOR refutation to bound $\psi_2$, and then relax the $\psi_1$ into a semi-random 2-XOR instance and refute this as well.
We establish the following two theorems with the aim of refuting $\psi_2$ and $\psi_1$ above.

\begin{theorem}[$d$-bounded Semi-random Partitioned 2-XOR Refutation]\label{2xor_bounded_partitioned}
	Let $0 < \varepsilon < 1/2$.
	Let $\psi = (V,\ell,E,r)$ be a semi-random partitioned 2-XOR instance with $|V| = n$, $\ell = O(\operatorname{poly}(n))$, and $|E| = m$.
	Suppose $\deg_i(v) \leq d = O(\poly(\varepsilon^{-1}, \log n))$ for all $i \in [\ell]$ and $v \in V$.
	Then there is an absolute constant $K$ such that if 
	\[m \geq \max\left\{\frac{Kd n\sqrt{\ell} (\log n)^3}{\varepsilon^{2}}, \frac{\ell}{\varepsilon^2}\right\}\]
	we can with high probability efficiently certify that
	\[\val_\phi \leq 1/2 + \varepsilon.\]
\end{theorem}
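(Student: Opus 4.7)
The plan is to instantiate the ``decouple + linearize + matrix Bernstein'' template from the random $k$-XOR analysis in \Cref{sec:overview}, with the degree hypothesis $\deg_i(v) \leq d$ playing the role of the combinatorial pseudorandomness that held automatically in the fully random setting. Encode the $i$-th part of $\psi$ as the symmetric matrix $B_i \in \R^{n\times n}$ whose $(u,v)$ entry aggregates the signed duplicates of constraints $(i,\{u,v\}) \in E$ (with a factor of $1/2$ to symmetrize), so that for any assignment $(x,y) \in \on^V \times \on^{[\ell]}$,
\[
m \cdot \bias_\psi(x,y) \;=\; \sum_{i=1}^\ell y_i\, x^\top B_i\, x.
\]
Apply Cauchy--Schwarz in the index $i$ with positive weights $w_i$ (to be tuned):
\[
\bigl(m\cdot \bias_\psi(x,y)\bigr)^2 \;\leq\; \Bigl(\sum_i w_i^2\Bigr)\cdot \sum_i \frac{(x^\top B_i x)^2}{w_i^2},
\]
and linearize via $(x^\top B_i x)^2 = \iprod{x^{\otimes 2},(B_i\otimes B_i)\,x^{\otimes 2}}$ together with $\|x^{\otimes 2}\|_2^2 = n^2$. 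This reduces the refutation task to certifying a spectral norm bound on
\[
M \;\seteq\; \sum_{i=1}^{\ell}\frac{B_i\otimes B_i}{w_i^2} \;\in\; \R^{n^2 \times n^2}.
\]

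The matrices $B_i \otimes B_i$ are mutually independent across $i$ but not mean-zero: $\E[B_i\otimes B_i]$ is deterministic and supported only on the ``swap'' positions $((u,v),(v,u))$, so its contribution is a lower-order term that can be absorbed into the final certificate. Applying the rectangular matrix Bernstein inequality (\Cref{mbernstein_rect}) to the centered independent sum requires controlling two parameters. The per-summand uniform bound follows from \Cref{gersh_rect} and the degree hypothesis: $\|B_i\otimes B_i\| = \|B_i\|^2 \leq (d/2)^2$, yielding $R = O(d^2/\min_i w_i^2)$. For the variance $\sigma^2 = \bignorm{\sum_i w_i^{-4}\, \E[(B_i\otimes B_i)^2]}$, use the tensor identity $(B_i\otimes B_i)^2 = B_i^2\otimes B_i^2$ and note that the diagonal of $\E[B_i^2\otimes B_i^2]$ at $((u,v),(u,v))$ equals $\deg_i(u)\deg_i(v)/16$, precisely a butterfly-degree quantity (cf.\ \Cref{fig:butterfly}). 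The degree hypothesis then gives
\[
\sum_i \frac{\deg_i(u)\,\deg_i(v)}{w_i^4} \;\leq\; d\, \sum_i \frac{\deg_i(v)}{w_i^4},
\]
which, combined with the global counting bound $\sum_i \deg_i(v) \leq 2m$, controls the diagonal of the variance matrix; off-diagonal entries of $\E[B_i^2\otimes B_i^2]$ reduce to smaller coupled butterfly sums that are bounded by analogous manipulations together with \Cref{gersh_rect}.

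With the weights $w_i$ tuned to balance the outer Cauchy--Schwarz factor $\sum_i w_i^2$ against the matrix Bernstein estimate for $\|M\|$, the resulting bound takes the form $\|M\| \leq \tilde{O}(\sqrt{md\log n} + d^2 \log n)$, giving $\bias_\psi^2 \leq \tilde{O}(\ell n^2 \|M\|/m^2)$. The two hypotheses $m \geq Kdn\sqrt{\ell}(\log n)^3/\varepsilon^2$ and $m \geq \ell/\varepsilon^2$ in the theorem statement are designed precisely so that the leading and lower-order terms of $\|M\|$ each contribute at most $\varepsilon^2/2$ to $\bias_\psi^2$, yielding the claimed certificate $|\bias_\psi| \leq 2\varepsilon$ and hence $\val_\psi \leq 1/2 + \varepsilon$. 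The main obstacle I expect is making the variance bound sharp enough to obtain the $\sqrt{\ell}$ dependence in $m$: a straightforward worst-case variance bound gives a worse power of $\ell$, so one needs an averaged, rather than worst-case, control of $\sum_i \deg_i(v)/w_i^4$. As hinted at by the ``effective $\ell_2$ norm'' discussion in \Cref{sec:overview}, achieving this is likely to involve a secondary decomposition of $M$ into $O(\log n)$ pieces grouped by partition size $t_i$, with each piece analyzed by trading a spectral norm estimate against a tighter bound on the effective $\ell_2$ norm of $x^{\otimes 2}$ restricted to the relevant coordinates, thereby extracting the polylogarithmic overhead $(\log n)^3$ appearing in the theorem's constraint on $m$.
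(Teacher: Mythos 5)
Your skeleton up through linearization is exactly the paper's: the paper weights part $i$ by $1/\sqrt{t_i}$ (your $w_i^2=\sqrt{t_i}$), shows that $\val_\psi\ge 1/2+\varepsilon$ forces $\Phi(x)=\sum_i t_i^{-1/2}\bigl(\sum_{e\in T_i}x^e r_i(e)\bigr)^2\ge 4\varepsilon^2 m^{3/2}/\sqrt{\ell}$, splits off the diagonal/non-centered terms (your ``swap'' remark matches the paper's footnote) as a separate $\sqrt{\ell m}$ contribution, and is left with your matrix $M=\sum_i B_i\otimes B_i/\sqrt{t_i}$, analyzed by matrix Bernstein with the butterfly degrees $\gamma(v,v')=\sum_i\deg_i(v)\deg_i(v')/t_i$ controlling the variance. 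The divergence is at the step you yourself flag as the main obstacle, and there the proposal does not go through.

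The certificate cannot be $n^2\,\norm{M}$, because for legitimate $d$-bounded semi-random instances $\norm{M}$ is genuinely too large, not merely hard to bound. Take $d=\lceil 1/\varepsilon^2\rceil$ and let every part $i\in[\ell]$ be the same star $\{u,z_1\},\dots,\{u,z_d\}$ with fresh random signs, so $m=d\ell$ and both hypotheses on $m$ hold once $\ell$ is a suitable polynomial in $n$. The row of $M$ indexed by $(u,u)$ has about $d^2$ entries, each a sum of $\ell$ independent signs scaled by $1/\sqrt{d}$, so with high probability $\norm{M}\gtrsim\sqrt{d\ell}=\sqrt{m}$; then $n^2\norm{M}\gtrsim n^2\sqrt{m}$ exceeds the target $\varepsilon^2 m^{3/2}/\sqrt{\ell}$ by a factor of order $n/(d(\log n)^3)$ at the theorem's constraint density, and no sharpening of the variance estimate can repair this since the lower bound on $\norm{M}$ is real. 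Your proposed fix --- splitting $M$ into $O(\log n)$ groups by partition size $t_i$ --- does not supply the missing mechanism: grouping the \emph{summands} does not localize heavy butterfly mass onto few \emph{coordinates} (pad each star part with random edges so all $t_i$ are equal; the grouping is then trivial, the touched coordinates are all of $V^2$, and the same polynomial loss in $\ell$ reappears). The paper instead partitions the coordinate set $V^2$ into butterfly-degree classes $S_0,\dots,S_{\log n}$, uses $\sum_{v,v'}\gamma(v,v')=4m$ (Lemma~\ref{2xor_gammasum}) to get $|S_j|\le 4m/(\alpha\beta^{j-1})$ (Lemma~\ref{2xor_sisize}), and bounds each block by $\max_{x,y\ \text{Boolean}}x^\top M_{j,k}y\le\sqrt{|S_j||S_k|}\,\norm{M_{j,k}}$, applying Theorem~\ref{mbernstein_rect} blockwise with both the per-summand bound and the variance scaling like the class threshold $\alpha\beta^{\max(j,k)}$ (Lemmas~\ref{2xor_bibound}, \ref{2xor_variancebound}, \ref{2xor_mnormsmall}); correspondingly the efficient certificate is the Grothendieck SDP value for the $\infty\to 1$ norm of $M$ (Lemma~\ref{grothendieck_sdp}), with spectral norms appearing only inside the probabilistic analysis of the blocks. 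Without this coordinate-space decomposition (or an equivalent device trading block size against block spectral norm), the $\sqrt{\ell}$ dependence in $m$ is not attainable, so the proposal has a genuine gap at its central step.
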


We will also use the following analysis of SDP based refutation whose proof we will later reproduce for completeness. This idea was used in the context of refutation for the first time in the work of Feige~\cite{Fei07}.

\begin{theorem}[Semi-random 2-XOR refutation] \label{semirandom_2xor}
    Let $0 < \varepsilon < 1/2$.
    Let $\phi = (V,E,r)$ be a semi-random 2-XOR instance with $|V| = n$ and $|E| = m$.
	Then there is an absolute constant $K$ such that for
	\[m \geq \frac{Kn}{\varepsilon^2}\]
	we can with high probability efficiently certify that
	\[\val_\phi \leq 1/2 + \varepsilon.\]
\end{theorem}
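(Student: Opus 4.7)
The plan follows the SDP-based strategy outlined for $2$-XOR in Section~\ref{sec:overview}: I would express $\biasval_\phi(x)$ as a quadratic form in $x$, relate the Boolean maximum to the infinity-to-1 norm of the associated matrix, bound that norm efficiently via the Grothendieck SDP of Lemma~\ref{grothendieck_sdp}, and finally show the bound is small with high probability by Hoeffding's inequality combined with a union bound over all Boolean $(x,y)$.

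Concretely, I would form the symmetric matrix $A \in \mathbb{R}^{n \times n}$ whose $(i,j)$ entry is $\frac{1}{2}\sum_{e \in E,\, e=\{i,j\}} r(e)$, where the sum runs over every copy of the edge $\{i,j\}$ in the constraint multiset. By construction $x^\top A x = \sum_{e \in E} r(e)\, x^e = m\, \biasval_\phi(x)$ for every $x \in \{\pm 1\}^V$, so
\[
  \max_{x \in \{\pm 1\}^V} \biasval_\phi(x) \;\leq\; \frac{1}{m}\max_{x,y \in \{\pm 1\}^V} x^\top A y \;=\; \frac{\pmn{A}}{m}.
\]
Lemma~\ref{grothendieck_sdp} then yields an efficiently computable $v$ with $\pmn{A} \leq v \leq K_G\,\pmn{A}$ for $K_G < 1.8$, and the algorithm certifies $\val_\phi = (1+\biasval_\phi)/2 \leq 1/2 + v/(2m)$.

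It remains to show $\pmn{A} \leq O(\sqrt{nm})$ with high probability. For any fixed $x,y \in \{\pm 1\}^V$,
\[
  x^\top A y \;=\; \sum_{e=\{i,j\} \in E} r(e)\cdot \tfrac{x_i y_j + x_j y_i}{2},
\]
which is a sum of $m$ independent mean-zero random variables of magnitude at most one, since in the semi-random model the random signs on distinct copies of an edge are independent. Hoeffding's inequality gives $\Pr[\,|x^\top A y| \geq t\,] \leq 2\exp(-t^2/(2m))$, and a union bound over the $4^n$ Boolean pairs $(x,y)$ with $t = C\sqrt{nm}$ for a sufficiently large absolute constant $C$ yields $\pmn{A} \leq C\sqrt{nm}$ with probability $\geq 1 - o(1)$. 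Plugging this into the SDP bound gives $v/(2m) \leq \tfrac{1}{2}K_G C\sqrt{n/m} \leq \varepsilon$ whenever $m \geq K n/\varepsilon^2$ for $K \geq (K_G C)^2/4$, yielding the desired refutation.

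There is no serious obstacle: the Grothendieck SDP handles the non-convexity of the Boolean maximization, Hoeffding cleanly handles the independent random signs (including duplicated constraints, each of which contributes an independent sign), and the union bound over $2^{O(n)}$ Boolean pairs is easily absorbed by the $\exp(-\Omega(t^2/m))$ tail with $t = \Theta(\sqrt{nm})$. The entire argument is the ``warm-up'' version of the partitioned-$2$-XOR analysis underlying Theorem~\ref{2xor_partitioned}, with no need for the butterfly-degree decomposition since every pair of variables is pooled into a single quadratic form.
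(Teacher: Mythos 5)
Your proposal is correct and takes essentially the same route as the paper's proof of Theorem~\ref{semirandom_2xor}: bound the Boolean quadratic form by the infinity-to-one norm of the constraint matrix, certify that norm efficiently via the Grothendieck SDP of Lemma~\ref{grothendieck_sdp}, and show it is $O(\varepsilon m)$ with high probability by a Chernoff-type tail bound (Hoeffding in your write-up, Bernstein in the paper) together with a union bound over the $2^{2n}$ Boolean pairs. The remaining differences---symmetrizing the matrix and using the threshold $C\sqrt{nm}$ rather than $C\varepsilon m$ directly---are cosmetic.
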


The proofs of Theorem~\ref{2xor_bounded_partitioned} and Theorem~\ref{semirandom_2xor} will be given later in this section.
For now, let us take them as facts and use them to prove Theorem~\ref{2xor_partitioned}.

\subsection{Combining $d$-bounded and $2$-XOR cases for general partitioned $2$-XOR refutation}

We start by iteratively defining two multisets $E_1$ and $E_2$ such that $E_1 + E_2 = E$.
Let $E_2$ initially be empty, and let $E_1$ initially be equal to $E$.
For $i \in [\ell]$ and $v \in V$, let $S(i,v) = \{u \in V \mid (i,\{v,u\}) \in E_2\}$.
Let $C$ be a fixed constant, whose exact value will be chosen later.
While there exist $i$ and $v$ such that $|S(i,v)| \geq C\varepsilon^{-2}$, remove everything in $S(i,v)$ from $E_1$ and add them all to $E_2$ instead.
Repeat this process until no such $i$ and $v$ exist. We have now split $E$ into two separate semi-random partitioned 2-XOR instances (defined by $E_1$ and $E_2$) over the same vertex set $V$, each of which we will refute separately.
Let $\psi_1 = (V, \ell, E_1, r)$ be the first instance and $\psi_2 = (V,\ell, E_2, r)$ be the second.

\begin{lemma}\label{2xor_refute_e1}
	If $|E_1| \geq \varepsilon m / 2$, we can $(1/2 + \varepsilon/2)$-refute $\psi_1$.
\end{lemma}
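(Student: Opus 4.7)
The plan is to directly apply Theorem~\ref{2xor_bounded_partitioned} to $\psi_1$ with parameter $\varepsilon/2$, noting that the splitting procedure was designed precisely to enforce the $d$-bounded hypothesis. First I verify the hypotheses of that theorem, then plug in.

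For the degree bound, the iterative procedure terminates exactly when no pair $(i,v)\in[\ell]\times V$ has $|S(i,v)|\geq C\varepsilon^{-2}$ among the currently-unmoved edges in $E_1$; at termination, $\deg_i(v) < C\varepsilon^{-2}$ in $\psi_1$ for all $i$ and $v$, so $\psi_1$ is $d$-bounded with $d = C\varepsilon^{-2}$. For semi-randomness, the partition $E = E_1 \sqcup E_2$ depends only on the (worst-case) edge multiset structure and not on the random sign function $r$, so the restriction $r|_{E_1}$ is uniformly random on $\{\pm 1\}^{E_1}$ as required in Theorem~\ref{2xor_bounded_partitioned}.

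Now I apply Theorem~\ref{2xor_bounded_partitioned} to $\psi_1$ with $\varepsilon' = \varepsilon/2$ and $d = C\varepsilon^{-2}$. The required constraint-count hypothesis becomes
\[
|E_1| \;\geq\; \max\!\Set{K' d n\sqrt{\ell}(\log n)^3/(\varepsilon')^2,\; \ell/(\varepsilon')^2} \;=\; \Theta\!\left(\frac{n\sqrt{\ell}(\log n)^3}{\varepsilon^4}\right),
\]
with the hidden constants depending only on $K'$ (from Theorem~\ref{2xor_bounded_partitioned}) and $C$. By the hypothesis of the lemma $|E_1|\geq \varepsilon m/2$ combined with the overall assumption $m \geq Kn\sqrt{\ell}(\log n)^3/\varepsilon^5$ from Theorem~\ref{2xor_partitioned}, I obtain $|E_1|\geq Kn\sqrt{\ell}(\log n)^3/(2\varepsilon^4)$; choosing the absolute constant $K$ sufficiently large relative to $K'$ and $C$ meets the required bound, yielding an efficient certificate $\val_{\psi_1} \leq 1/2 + \varepsilon/2$ with high probability.

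The proof is essentially mechanical: the splitting procedure enforces the degree bound by construction and preserves semi-randomness because it is oblivious to $r$. No real obstacle arises; the only quantitative point is the factor-of-$1/\varepsilon$ gap between Theorem~\ref{2xor_partitioned}'s and Theorem~\ref{2xor_bounded_partitioned}'s constraint-count thresholds, which is exactly absorbed by the $|E_1| \geq \varepsilon m/2$ lower bound.
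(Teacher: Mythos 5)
Your proposal is correct and follows essentially the same route as the paper: the pruning step guarantees $\psi_1$ is $d$-bounded with $d = C\varepsilon^{-2}$, semi-randomness is preserved because the split is oblivious to $r$, and Theorem~\ref{2xor_bounded_partitioned} applied with accuracy $\varepsilon/2$ is fed by $|E_1|\geq \varepsilon m/2$ together with the lower bound on $m$ (choosing $K$ large relative to $C$), exactly as in the paper's proof. Like the paper, you verify only the $d n\sqrt{\ell}(\log n)^3$ term of the max in that theorem's hypothesis (the $\ell/(\varepsilon/2)^2$ term is glossed in both), so no gap beyond what is already present in the paper is introduced.
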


\begin{lemma}\label{2xor_refute_e2}
	If $|E_2| \geq \varepsilon m / 2$, we can $(1/2 + \varepsilon/2)$-refute $\psi_2$.
\end{lemma}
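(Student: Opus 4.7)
The plan is to relax $\psi_2$ to a bipartite semi-random $2$-XOR instance $\psi'$ whose bias is controlled by the $\infty \to 1$ norm of a random sign matrix, and then invoke Grothendieck's inequality together with Hoeffding concentration to carry out the refutation via the SDP in Step~\ref{refute_e2_step} of Algorithm~\ref{refutation_algorithm}. The key structural fact we will exploit is that the ``burning'' procedure producing $E_2$ only adds at least $C\varepsilon^{-2}$ constraints per newly introduced head, so the number of distinct head pairs $(i,w)$ satisfies $|V'_L| \leq |E_2|\varepsilon^2/C$.

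First I would build $\psi'$ with left vertex set $V'_L \subseteq [\ell]\times V$, given by the collection of heads $(i,w)$ at which the while-loop moved edges into $E_2$, and right vertex set $V'_R = V$. Every constraint $(i,\{v,u\}) \in E_2$ was moved because some $(i,w)$ with $w\in\{v,u\}$ had a large $E_1$-neighborhood at the time of its move; designate this $(i,w)$ as the head of the constraint, let $w'$ be the other endpoint, and add the bipartite edge $((i,w),w')$ with sign $r_i(v,u)$. Setting $x'_{(i,w)} = x_w y_i$ and $y'_{w'} = x_{w'}$, the linearization identity $x'_{(i,w)} y'_{w'} = x_w x_{w'} y_i = x_v x_u y_i$ shows that any assignment $(x,y)$ to $\psi_2$ lifts to an assignment $(x',y')$ to $\psi'$ satisfying the same fraction of constraints. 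Hence
\[
\val_{\psi_2} \;\leq\; \val_{\psi'} \;=\; \tfrac{1}{2} + \tfrac{1}{2|E_2|}\|M\|_{\infty\to 1},
\]
where $M$ is the $V'_L \times V$ matrix with $M_{(i,w),u} = \mu'((i,w),u)$.

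By Lemma~\ref{grothendieck_sdp}, the SDP value checked in Step~\ref{refute_e2_step} is within a constant factor of $\|M\|_{\infty\to 1}$. So it suffices to show that with high probability over the semi-random signs $r$, $\|M\|_{\infty\to 1} \leq O(\varepsilon|E_2|)$. For any fixed Boolean $(x',y')$, the quantity
\[
x'^\top M y' \;=\; \sum_{e \in E_2} r(e)\cdot x'_{\text{head}(e)}\,y'_{\text{other}(e)}
\]
is a sum of $|E_2|$ independent $\pm 1$ random variables, so Hoeffding's inequality combined with a union bound over all $2^{|V'_L|+n}$ Boolean pairs yields
\[
\|M\|_{\infty\to 1} \;\leq\; O\!\left(\sqrt{|E_2|\,(|V'_L|+n)}\right)
\]
with high probability.

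Finally, plugging in the structural bound $|V'_L| \leq \varepsilon^2|E_2|/C$ (for $C$ a sufficiently large absolute constant, chosen in the decomposition) together with the hypothesis $m \geq Kn\sqrt{\ell}(\log n)^3/\varepsilon^5$ (which, via $|E_2| \geq \varepsilon m/2$, forces $n \leq \varepsilon^2 |E_2|$) gives $|V'_L|+n = O(\varepsilon^2|E_2|)$, so $\|M\|_{\infty\to 1} = O(\varepsilon|E_2|)$ as required. The only real bookkeeping obstacle is unambiguously assigning a head $(i,w)$ to each constraint moved into $E_2$ so that $|V'_L|$ is directly controlled by the number of iterations of the while-loop; this follows from the termination condition, since each iteration contributes at least $C\varepsilon^{-2}$ new constraints to $E_2$ but only a single new head to $V'_L$.
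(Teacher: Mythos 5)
Your proposal is correct and follows essentially the same route as the paper: you form the bipartite instance whose left vertices are the heads $(i,v)$ created by the while-loop, bound their number by $\varepsilon^2|E_2|/C$, lift the assignment via $x'_{(i,v)} = y_i x_v$, and refute with the Grothendieck SDP plus a Hoeffding/union-bound estimate on $\|M\|_{\infty\to 1}$. The only cosmetic difference is that the paper packages the final concentration step as an invocation of Theorem~\ref{semirandom_2xor} (after checking $m' \geq Kn'/(\varepsilon/2)^2$), whereas you inline the same argument directly.
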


The proofs of Lemmas~\ref{2xor_refute_e1}~\&~\ref{2xor_refute_e2} (which we give later in this subsection) respectively describe Steps~\ref{refute_e1_step}~\&~\ref{refute_e2_step} of Algorithm~\ref{refutation_algorithm}. 
Using the above two lemmas, we can prove Theorem~\ref{2xor_partitioned}.

\begin{proof}[Proof of Theorem~\ref{2xor_partitioned}]
	Begin our procedure by pruning as above to obtain the sets $E_1$ and $E_2$.
	If the original instance $\psi$ has a $(1/2+\varepsilon)$-satisfying solution $(x,y)$, then across both instances $\psi_1$ and $\psi_2$ we can satisfy at least $(1/2+\varepsilon)m$ constraints by choosing the assignment $(x,y)$ for each instance.
	Let $s$ be the maximum number of constraints we can simultaneously satisfy in $\psi$, and let $s_j$ be the maximum number of constraints we can simultaneously satisfy in $\psi_j$.
	We give three cases, in each one bounding $s \leq s_1 + s_2$ by $(1/2 + \varepsilon)m$ to complete the proof.

	Suppose both $|E_1|,|E_2| \geq \varepsilon m / 2$.
	Then we may $(1/2+\varepsilon/2)$ refute both instances by Lemma~\ref{2xor_refute_e1}, and Lemma~\ref{2xor_refute_e2}, showing that
	\[s_1 + s_2 \leq (1/2 + \varepsilon/2)(|E_1| + |E_2|) \leq (1/2 + \varepsilon)m.\]

	Now suppose exactly one of $|E_1|$ and $|E_2|$ is at least $\varepsilon m / 2$, and the other has size below $\varepsilon m / 2$.
	Without loss of generality, let the smaller one be $E_2$.
	Refute the first instance, and note that $|s_2| \leq |E_2| \leq \varepsilon m / 2$.
	Then
	\[s_1 + s_2 \leq (1/2 + \varepsilon/2)|E_1| + \varepsilon m/2 \leq (1/2 + \varepsilon)m.\]

	There is no case when both of the $|E_j|$ are small, since they would comprise of at most $\varepsilon m$ total constraints.
\end{proof}

All that remains is to prove Lemma~\ref{2xor_refute_e1} and Lemma~\ref{2xor_refute_e2}.

\begin{proof}[Proof of Lemma~\ref{2xor_refute_e1}]
	Within each of the $\ell$ partitions $T_i$ of $\psi_1$, each of the $n$ vertices is adjacent to at most $C\varepsilon^{-2}$ edges because of the way we pruned $E_1$.
	Then in the style of Theorem~\ref{2xor_bounded_partitioned} we have a degree bound of $d = C\varepsilon^{-2}$.
	With a large enough constant in our lower bound on $m$, for any fixed constant $R$ we can force
	\[|E_2| \geq \varepsilon m / 2 \geq \frac{4RCn\sqrt{\ell} (\log n)^3}{\varepsilon^4} = \frac{Rd n\sqrt{\ell} (\log n)^3}{(\varepsilon/2)^2}.\]
	Therefore by Theorem~\ref{2xor_bounded_partitioned} with high probability we may $(1/2 + \varepsilon/2)$-refute the $d$-bounded partitioned 2-XOR instance $\psi_1$.
\end{proof}

\begin{proof}[Proof of Lemma~\ref{2xor_refute_e2}]
We reduce refutation of $E_2$ to refutation of (non-partitioned) semi-random 2-XOR.
Let $m' = |E_2|$.
Recall that we constructed $E_2$ by repeatedly adding sets $S(i,v)$ of size at least $K\varepsilon^{-2}$, where every element in $S(i,v)$ was of the form $(i, \{v,u\})$.
Let $X = \{(i,v) \mid S(i,v)$ was added to $E_2$ during its construction$\}$.
Since each set $S(i,v)$ is disjoint from all other such sets, there are $m'$ constraints in the $S(i,v)$ in total, and each $|S(i,v)| \geq C \varepsilon^{-2}$, it follows that $|X| \leq \frac{\varepsilon^2 m'}{C}$.

We a bipartite semi-random 2-XOR instance $\phi = (X \cup V, E_2', r')$ on the bipartitioned vertex set $X \cup V$ as follows.
Let
\[E_2' = \{((i,v), u) \mid (i, \{v,u\})\text{ was originally added to $E_2$ via }S(i,v)\}.\]
We can think the constraint graph of $\phi$ intuitively follows: For each edge $e \in T_i$ there is some $(i,v) \in X$ such that $v \in e$.
Then an edge $e \in T_i$ is transformed into an edge between $(i,v)$ and the vertex $u \in e$ which is not $v$.

We inherit the random sign function $r':E_1' \to \{\pm 1\}$ from $r$ by setting $r'((i,v),u) = r_i(v,u)$.
The distribution of $r'$ is uniform over all functions $E_2' \to \{\pm 1\}$, since $r'$ is just $r$ carried over to $E'$ via an isomorphism of sets.

We show now that refuting the 2-XOR instance $\phi$ is a relaxation of refuting the partitioned 2-XOR instance with constraint set $E_2$, so it suffices to refute the former.
Say $x,y$ is an maximally satisfying assignment for the partitioned 2-XOR instance with constraint set $E_2$.
Let $x'$ be a vector in $\{\pm 1\}^{X \cup V}$ defined as follows. Let $x_{(i,v)}' = y_i x_v$ for $(i,v) \in X$ and let $x_v' = x_v$ for $v \in V$.
Then for $e = (v,u)$ we have
\[x^{e} y_i r_i(e) = 1 \implies x_v y_i x_u r_{i,v}'(u) = 1 \implies x_v' x_{(i,v)}' r_{i,v}'(u) = 1\]
so the maximally satisfying assignment to $E_2'$ will satisfy at least as many constraints as the maximally satisfying assignment to $E_2$.

Let $n' = |X| + |V|$, the number of variables in $\phi$.
Then $n' \leq \varepsilon^2 m'/C + n \leq 2\varepsilon^2 m'/C$ since $\varepsilon^2 m'/C \geq \varepsilon^3 m/(2C) \geq \frac{Kn\sqrt{\ell}(\log n)^3}{2C\varepsilon^2} \geq n$ for sufficiently large $n$.
By choosing $C$ large enough, we can force $m' \geq \frac{Kn'}{(\varepsilon/2)^2}$ for any fixed constant $K$, forcing the conditions of Theorem~\ref{semirandom_2xor} to hold thereby allowing for $(1/2 + \varepsilon/2)$-refutation.
\end{proof}

\subsection{Semi-random 2-XOR Refutation}

Let us now reproduce the proof of the needed $2$-XOR refutation and complete the case handled by  Theorem~\ref{semirandom_2xor}.
\begin{proof}[Proof of Theorem~\ref{semirandom_2xor}]
Extend the domain of $r$ to all of $V^2$, so that previously undefined values are now 0.
Let there be a canonical pair representing each edge, so that at most one of $r(u,v)$ and $r(v,u)$ is nonzero.
Define $M \in \mathbb{R}^{V \times V}$ by $M(v,u) = r(v,u)$.
Then
\[(2\val_\phi-1)m = \max_{x \in \{\pm 1\}^V} x^\top M x \leq \pmn{M}\] so it suffices to certify that
\[\pmn{M} \leq 2 \varepsilon m.\]
It follows from the Lemma~\ref{grothendieck_sdp} that we can bound this maximum of $x^\top M y$ over Boolean vectors $x$ and $y$ to within a constant factor of its true value using an SDP.
Therefore, for refutation it suffices to show that the maximal value of $x^\top M y$ over Boolean vectors is actually below $C\varepsilon m$ with high probability for some sufficiently small constant $C$.

Fix $x,y \in \{\pm 1\}^n$.
Set $t = C\varepsilon m$.
By Bernstein's inequality,
\[\operatorname{P}\left[\sum_{(u,v) \in e} x_u y_v r(u,v) \geq t\right] \leq \exp\left(\frac{-t^2/2}{m + t/3}\right) \leq \exp(-C^2\varepsilon^2 m) \leq \exp(-C^2Kn)\]
Where the last inequality follows since $m \geq K n / \varepsilon^2$.
By choosing large enough $K$ we can force this value below $1/(p(n)2^{2n})$ for a polynomial $p(n)$ of arbitrarily high degree.
Then taking a union bound over all $2^{2n}$ choices of fixed $x,y \in \{\pm 1\}^n$ allows us to certify $x^\top M y \leq C \varepsilon m$ to any desired polynomial probability of success.
\end{proof}

\subsection{The $d$-Bounded Case: Proof of Theorem~\ref{2xor_bounded_partitioned}}

The remainder of this section (and the paper) is devoted to proving Theorem~\ref{2xor_bounded_partitioned} on the refutation of degree-bounded semi-random partitioned $2$-XOR.

Define the multiset $T_i = \{e \mid (i,e) \in E\}$, containing one copy of $e$ for each copy of $(i,e)$ in $E$.
Let $t_i = |T_i|$.
We assume that each $T_i$ is nonempty, as lowering $\ell$ only makes an instance easier to refute.

Define a function $\Phi: \{\pm 1\}^n \to \mathbb{R}$ as follows.
\[\Phi(x) = \sum_{i=1}^\ell \frac{1}{\sqrt{t_i}}\left(\sum_{e \in T_i} x^e r_i(e)\right)^2\]
The following lemma shows that when $\val_\psi$ has a large value, $\Phi(x)$ can take on a large value as well.
\begin{lemma}
	If $\val_\psi \geq 1/2 + \varepsilon$, then there is $x \in \{\pm 1\}^V$ such that
	\[\Phi(x) \geq \frac{4	\varepsilon^2 m^{3/2}}{\ell^{1/2}}\]
\end{lemma}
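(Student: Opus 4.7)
The plan is to extract from a good assignment a lower bound on $\sum_i |S_i(x)|$, where $S_i(x) := \sum_{e \in T_i} x^e r_i(e)$, and then apply Cauchy--Schwarz twice to convert this into a lower bound on the weighted sum-of-squares $\Phi(x)$.

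\textbf{Step 1: From satisfaction fraction to bias.} If $\val_\psi \geq 1/2 + \varepsilon$, there exist $x \in \{\pm 1\}^V$ and $y \in \{\pm 1\}^\ell$ under which the fraction of constraints satisfied is at least $1/2 + \varepsilon$. Converting ``fraction satisfied'' to ``bias'' in the usual way, this means
\[
\sum_{i=1}^\ell y_i\, S_i(x) \;=\; \sum_{i=1}^\ell \sum_{e \in T_i} y_i\, x^e\, r_i(e) \;\geq\; 2\varepsilon m .
\]
Since each $y_i \in \{\pm 1\}$, we can drop the signs to obtain $\sum_{i=1}^\ell |S_i(x)| \geq 2\varepsilon m$.

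\textbf{Step 2: Cauchy--Schwarz to introduce the $t_i^{-1/2}$ weights.} Writing $|S_i(x)| = \frac{|S_i(x)|}{t_i^{1/4}} \cdot t_i^{1/4}$ and applying Cauchy--Schwarz over $i \in [\ell]$ gives
\[
(2\varepsilon m)^2 \;\leq\; \Bigl(\sum_{i=1}^\ell |S_i(x)|\Bigr)^{\!2} \;\leq\; \Bigl(\sum_{i=1}^\ell \tfrac{S_i(x)^2}{\sqrt{t_i}}\Bigr)\cdot \Bigl(\sum_{i=1}^\ell \sqrt{t_i}\Bigr) \;=\; \Phi(x)\cdot \sum_{i=1}^\ell \sqrt{t_i}.
\]

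\textbf{Step 3: Bound $\sum_i \sqrt{t_i}$ by Cauchy--Schwarz once more.} Since $\sum_i t_i = m$, another application of Cauchy--Schwarz gives $\sum_{i=1}^\ell \sqrt{t_i} \leq \sqrt{\ell \sum_{i=1}^\ell t_i} = \sqrt{\ell m}$. Combining with the previous inequality,
\[
\Phi(x) \;\geq\; \frac{(2\varepsilon m)^2}{\sqrt{\ell m}} \;=\; \frac{4 \varepsilon^2 m^{3/2}}{\sqrt{\ell}},
\]
as claimed.

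There is no real obstacle here: the only mild subtlety is choosing the ``right'' weighting so that the two Cauchy--Schwarz applications compose cleanly, and indeed the factor $1/\sqrt{t_i}$ in the definition of $\Phi$ is precisely calibrated so that this two-step Cauchy--Schwarz argument is tight (up to constants) when the $S_i(x)$ are balanced and the $t_i$ are all equal to $m/\ell$.
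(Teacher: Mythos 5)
Your proof is correct and follows essentially the same route as the paper: both extract the bias bound $\sum_i y_i S_i(x) \geq 2\varepsilon m$ from a $(1/2+\varepsilon)$-satisfying assignment (the paper phrases it as $\sum_i (2s_i - t_i) \geq 2\varepsilon m$ and uses $y_i^2 = 1$ rather than dropping signs), then apply Cauchy--Schwarz with the $t_i^{1/4}$ split and once more to bound $\sum_i \sqrt{t_i} \leq \sqrt{\ell m}$. No substantive difference.
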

\begin{proof}
    Let $(x,y)$ be a maximally satisfying assignment to $\psi$.
	Let $s_i$ be the number of edges in $T_i$ which are satisfied by $(x,y)$.
	Note that $\sum_i s_i \geq (1/2 + \varepsilon)m$, $\sum_i t_i = m$, and
	$\sum_{e \in T_i} x^e r_i(e)y_i = 2s_i - t_i$.
	We use the fact that every $y_i^2 = 1$ to show
	\begin{align*}
	\Phi(x)
	&= \sum_{i = 1}^\ell \frac{1}{\sqrt{t_i}}\left(\sum_{e \in T_i} x^e r_i(e)\right)^2
	= \sum_{i = 1}^\ell \frac{1}{\sqrt{t_i}}\left(\sum_{e \in T_i} x^e r_i(e)y_i\right)^2
	= \sum_{i = 1}^\ell \frac{(2s_i - t_i)^2}{\sqrt{t_i}} \\
	&\geq \frac{\left(\sum_{i=1}^\ell 2s_i - t_i\right)^2}{\sum_{i=1}^\ell \sqrt{t_i}}
	\geq \frac{\big(2(1/2 + \varepsilon)m - m\big)^2}{\sqrt{\ell m}} = \frac{4\varepsilon^2 m^{3/2}}{\ell^{1/2}}
	\end{align*}
where the first inequality follows from Cauchy-Schwarz applied to the sequences $u_i = (2s_i-t_i)/t_i^{1/4}$ and $v_i = t_i^{1/4}$, and the second inequality follows from Cauchy-Schwarz applied to $\sum_i \sqrt{t_i}$, as well as our bounds on $\sum_i s_i$ and $\sum_i t_i$.
\end{proof}
So to certify $\val_\phi \leq 1/2 + \varepsilon$, it suffices to certify $\Phi(x) \leq \frac{4 \varepsilon^2 m^{3/2}}{\ell^{1/2}}$ for all $x$.

Notice that by focusing on bounding $\Phi(x)$, we have effectively removed $y$ from consideration.
Write
\[\Phi(x) = \left(\sum_{i=1}^\ell \frac{1}{\sqrt{t_i}}\sum_{(e, e') \in \binom{T_i}{2}} x^e x^{e'} r_i(e)r_i(e) \right) + \left(\sum_{i=1}^\ell \frac{1}{\sqrt{t_i}} \sum_{e \in T_i} (x^e)^2r_i(e)^2\right)\]
and let $\Phi_1(x)$ be the first term above and $\Phi_2(x)$ be the second.
We bound $\Phi(x)$ by bounding the two terms individually.
\begin{lemma}\label{2xor_phi2small}
We have the upper bound	\[\Phi_2(x) \leq \frac{\varepsilon^2 m^{3/2}}{\ell^{1/2}}  \ . \]
\end{lemma}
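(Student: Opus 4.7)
The quantity $\Phi_2(x)$ collapses dramatically once we exploit Booleanity. First I would observe that for any $x \in \{\pm 1\}^V$ and any edge $e \in \binom{V}{2}$ we have $(x^e)^2 = x_u^2 x_v^2 = 1$, and similarly $r_i(e)^2 = 1$ since $r$ takes values in $\{\pm 1\}$. Substituting these into the definition
\[
\Phi_2(x) = \sum_{i=1}^\ell \frac{1}{\sqrt{t_i}} \sum_{e \in T_i} (x^e)^2 r_i(e)^2
\]
each inner sum simply counts $t_i$ (remembering that $T_i$ is a multiset and each copy of a constraint in $E$ contributes its own term), so $\Phi_2(x) = \sum_{i=1}^\ell t_i / \sqrt{t_i} = \sum_{i=1}^\ell \sqrt{t_i}$. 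Notice in particular that $\Phi_2(x)$ is independent of $x$.

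Next I would apply the Cauchy--Schwarz inequality to the sequences $(1)_{i=1}^\ell$ and $(\sqrt{t_i})_{i=1}^\ell$ to obtain
\[
\sum_{i=1}^\ell \sqrt{t_i} \leq \sqrt{\ell \cdot \sum_{i=1}^\ell t_i} = \sqrt{\ell m},
\]
using $\sum_i t_i = |E| = m$.

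Finally I would invoke the hypothesis $m \geq \ell/\varepsilon^2$ from the statement of Theorem~\ref{2xor_bounded_partitioned}, which rearranges to $\ell \leq \varepsilon^2 m$ and hence $\sqrt{\ell m} \leq \varepsilon^2 m^{3/2}/\sqrt{\ell}$. Chaining these gives $\Phi_2(x) \leq \varepsilon^2 m^{3/2}/\ell^{1/2}$, as required. There is no real obstacle here; the only subtle point is using the second (often ignored) lower bound $m \geq \ell/\varepsilon^2$ in the hypothesis of the theorem, which is exactly what this step is designed to consume so that $\Phi_2$ is absorbed into the target bound and the remaining work of bounding $\Phi_1$ absorbs the other (main) lower bound on $m$.
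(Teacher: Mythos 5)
Your proposal is correct and matches the paper's proof essentially line for line: the inner sum collapses to $t_i$ by Booleanity of $x^e$ and $r_i(e)$, Cauchy--Schwarz gives $\sum_i \sqrt{t_i} \leq \sqrt{\ell m}$, and the hypothesis $\ell \leq \varepsilon^2 m$ converts this into the stated bound. Nothing is missing.
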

\begin{proof}
	See that every term in the inner summation in $\Phi_2(x)$ is 1, since $r_i(e),x^e \in \{\pm 1\}$.
	Thus the entire inner summation is $t_i$, meaning
	\[\Phi_2(x) = \sum_{i=1}^\ell \sqrt{t_i} \leq \sqrt{\ell \sum_{i=1}^\ell t_i} = \sqrt{\ell m}\]
	where the inequality follows from Cauchy-Schwarz.
	Since $\ell \leq \varepsilon^2 m$ by our lower bound on $m$, we see $\sqrt{\ell m} \leq \frac{\varepsilon^2 m^{3/2}}{\ell^{1/2}}$, completing the proof.
\end{proof}

Lemma~\ref{2xor_phi1small} is always true, so requires no certification.
Thus, if we can prove the following lemma, we will be done.

\begin{lemma}\label{2xor_phi1small}
	With high probability, we can certify that
	\[\Phi_1(x) \leq \frac{3\varepsilon^2 m^{3/2}}{\ell^{1/2}}\]
	for all $x \in \{\pm 1\}^n$.
\end{lemma}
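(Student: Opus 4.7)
My plan is to recognize $\Phi_1(x)$ as a quadratic form in the tensor-squared assignment $z = x^{\otimes 2} \in \{\pm 1\}^{V^2}$. Concretely, define the $n^2 \times n^2$ matrix $A$ by
\[
A_{(v,v'),(u,u')} \;=\; \sum_{i=1}^{\ell} \frac{\mu_i(v,u)\,\mu_i(v',u')}{\sqrt{t_i}}\,\mathbf{1}[(v,u)\neq(v',u')],
\]
so that $\Phi_1(x) = z^\top A z$. The SDP in Step~\ref{refute_e1_step} is then precisely the Grothendieck-style relaxation of $\|A\|_{\infty\to 1} := \max_{x,y\in\{\pm 1\}^{V^2}} x^\top A y$, and its value upper-bounds $\Phi_1(x)$ for every Boolean $x$. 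By Lemma~\ref{grothendieck_sdp}, the SDP approximates $\|A\|_{\infty\to 1}$ within a factor $K_G<1.8$, so it suffices to prove that, with high probability over the random signs $\{r_i\}$, $\|A\|_{\infty\to 1} \leq c\cdot\varepsilon^2 m^{3/2}/\sqrt{\ell}$ for a suitably small absolute constant $c$.

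My approach is to write $A = \sum_{i=1}^{\ell} Z_i$ with
\[
Z_i \;=\; \frac{1}{\sqrt{t_i}}\bigl(M_i\otimes M_i - \E[M_i\otimes M_i]\bigr),
\]
where $M_i$ is the symmetric $n\times n$ signed matrix with $(M_i)_{v,u}=\mu_i(v,u)$. The $Z_i$ are independent mean-zero matrices because the sign vectors $\{r_i\}_{i\in[\ell]}$ are independent across partitions, so matrix Bernstein (Theorem~\ref{mbernstein_rect}) controls $\|A\|$ in terms of $R = \max_i\|Z_i\|$ and $\sigma^2 = \|\sum_i \E Z_i^2\|$. Using the degree hypothesis $\deg_i(v)\leq d$ with Lemma~\ref{gersh_rect}, I get $\|M_i\|\leq d$ and hence $\|Z_i\|\leq 2d^2/\sqrt{t_i}$; a fourth-moment computation similarly bounds $\|\E Z_i^2\| \leq O(d^4)/t_i$.

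The crude combination of $\|A\|_{\infty\to 1}\leq n^2\|A\|$ with a single application of matrix Bernstein is too lossy in two ways: it loses a full factor of $n^2$, and when the partition sizes $t_i$ are imbalanced the sum $\sum_i 1/t_i$ can be as large as $\ell$. Following the overview in Section~\ref{sec:overview}, I would partition the index set $[\ell]$ into $O(\log n)$ geometric buckets $B_j=\{i:t_i\in[2^j,2^{j+1})\}$ and apply matrix Bernstein separately to each piece $A_j:=\sum_{i\in B_j}Z_i$. Within bucket $j$ the normalization $1/\sqrt{t_i}$ is essentially constant and $|B_j|\leq\min(\ell, m/2^j)$ shrinks as $j$ grows. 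Crucially, rather than using the crude $\|A_j\|_{\infty\to 1}\leq n^2\|A_j\|$, I would exploit that the combined row-and-column support of $\sum_{i\in B_j} M_i\otimes M_i$ is controlled by $|B_j|\cdot d^2$ (since each $M_i$ has at most $d$ nonzeros per row), yielding a sparsity-aware bound of the form $|x^\top A_j y|\leq s_j\|A_j\|$ with $s_j\lesssim |B_j|\cdot d^2$. Summed across the $O(\log n)$ buckets, these per-bucket estimates form a geometric series whose total comfortably fits under $c\varepsilon^2 m^{3/2}/\sqrt{\ell}$, thanks to the $(\log n)^3$ cushion in the hypothesis $m \geq K d n\sqrt{\ell}(\log n)^3/\varepsilon^2$.

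The main obstacle I anticipate is sharpening the variance proxy $\sigma^2$ and the sparsity-aware spectral bound within each bucket. Computing $\E[(M_i\otimes M_i)^2] = \E[M_i^2\otimes M_i^2]$ requires a careful fourth-moment analysis in the random signs $r_i$, and the Kronecker structure can amplify spectral norms multiplicatively. Moreover, since $M_i$ is symmetric, the mean correction $\E[M_i\otimes M_i]$ has nonzero entries both on the main diagonal $(v,u)=(v',u')$ and on the ``flipped'' positions $(v,u)=(u',v')$ coming from $\mu_i(v,u)=\mu_i(u,v)$; these flipped contributions must be accounted for so that the mean-zero condition required by matrix Bernstein remains consistent with the $(v,u)\neq(v',u')$ exclusion built into the SDP. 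Modulo these bookkeeping issues, the bucket decomposition combined with the sparsity-aware spectral bound and matrix Bernstein yields the claimed inequality.
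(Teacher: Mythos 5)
Your opening moves coincide with the paper's: linearize via $z=x^{\otimes 2}$, bound $\Phi_1$ by the $\infty\to 1$ norm of the $n^2\times n^2$ matrix, certify with the Grothendieck SDP, and control the random part by matrix Bernstein applied to the independent per-partition contributions. The gap is in the decomposition. Bucketing the indices $i\in[\ell]$ by $t_i$ attacks the wrong heterogeneity: the $1/\sqrt{t_i}$ weights already built into $\Phi$ are exactly what neutralizes imbalance among the $t_i$. The real obstruction in the semi-random setting lives on \emph{pairs of variables}: the variance proxy $\sigma^2=\bigl\|\sum_i \E Z_i^2\bigr\|$ is governed by $\max_{v,v'}\gamma(v,v')$ with $\gamma(v,v')=\sum_i \deg_i(v)\deg_i(v')/t_i$, and a worst-case constraint graph can make this huge for a few bad pairs while the matrix's row/column support remains of size $\Theta(n^2)$. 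Bucketing by $t_i$ does not separate those pairs from the rest. Concretely, take $\ell=n$, all $t_i= m/n$ (a single bucket), $d=O(1)$, and one pair $(v^*,v'^*)$ of degree $d$ in every partition, the rest random-like: then $\gamma(v^*,v'^*)\approx d^2 n^2/m$, matrix Bernstein cannot certify $\|A_j\|$ below roughly $\sqrt{\gamma(v^*,v'^*)}$, and the honest support bound $\sqrt{|R_j||C_j|}\,\|A_j\|\approx n^2\sqrt{\gamma(v^*,v'^*)}$ exceeds the target $\varepsilon^2 m^{3/2}/\sqrt{\ell}$ at the relevant density $m=\tilde\Theta(n^{3/2})$. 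Moreover, the step meant to rescue this, the ``sparsity-aware'' inequality $|x^\top A_j y|\le s_j\|A_j\|$ with $s_j\lesssim |B_j|d^2$, is false: the valid inequality uses $\sqrt{|R_j||C_j|}$ where $R_j,C_j$ are the nonzero row/column index sets, and already in the fully random case each partition touches about $(2t_i)^2$ pairs, so a single bucket has row support $\Theta(n^2)$, not $O(|B_j|d^2)=O(n)$.

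What the paper does instead, and what your plan is missing, is to classify the \emph{row/column indices} $(v,v')\in V^2$ by butterfly degree into classes $S_0,\dots,S_{\log n}$ (thresholds $\alpha\beta^j$) and run Bernstein on each block $M_{j,k}$. This creates exactly the trade-off your $t_i$-buckets cannot: inside a block, both the per-$i$ norm bound (\cref{2xor_bibound}, via $w_i/\sqrt{t_i}\le d\sqrt{\gamma(v,v')}$) and the variance (\cref{2xor_variancebound}) scale with the class threshold $\alpha\beta^{\max(j,k)}$, while the mass identity $\sum_{v,v'}\gamma(v,v')=4m$ (\cref{2xor_gammasum}) forces high-butterfly-degree classes to be small, $|S_j|\le 4m/(\alpha\beta^{j-1})$ (\cref{2xor_sisize}), so the product $\sqrt{|S_j||S_k|}\cdot\|M_{j,k}\|$ stays below the target in every regime. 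Your observation about the flipped entries $(v,u)=(u',v')$ carrying nonzero mean is a fair bookkeeping point (the paper excludes only the diagonal $(v,u)=(v',u')$, which is $\Phi_2$), but it does not repair the two issues above: without a decomposition keyed to $\gamma(v,v')$, the variance and the Boolean-support factor cannot be controlled simultaneously, and the claimed per-bucket sparsity bound does not hold.
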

The rest of this section is dedicated to proving Lemma~\ref{2xor_phi1small}.
For $i \in [\ell]$, define the random function $\mu_i : \binom{V}{2} \to \Z$ as the sum of $r_i(e)$ over all copies of $(i,e) \in E$.
Then define the matrix $M \in \mathbb{R}^{V^2 \times V^2}$ by
\begin{align*}
	M(v,v';u,u') =
	\begin{cases}
	\sum_{i=1}^\ell \frac{1}{\sqrt{t_i}}\mu_i(v,u)\mu_i(v',u') &\text{when $(v,u) \neq (v',u')$, and} \\
	0 &\text{otherwise.}
	\end{cases}
\end{align*}
Let $x^{\otimes 2}$ be the vector with $(x^{\otimes 2})_{u,v} = x_u x_v$.
By tracing definitions, we see that
\begin{equation*}
\Phi_2(x) = (x^{\otimes 2})^\top M (x^{\otimes 2}) \leq \pmn{M}
\end{equation*}
so if we can certify that $\pmn{M} \leq \frac{3\varepsilon^2 m^{3/2}}{\sqrt{\ell}}$ we are done.
By Lemma~\ref{grothendieck_sdp}, we can efficiently approximate this value within a factor $\leq 3$, so it suffices to just show that $\pmn{M}$ is at most $\frac{\varepsilon^2 m^{3/2}}{\sqrt{\ell}}$.

We start by assigning weights $\gamma(v,v')$ to all pairs $(v,v') \in V^2$.
Define
\begin{equation}\label{gamma_defn}\gamma(v,v') = \sum_{i=1}^\ell \frac{\deg_i(v) \deg_i(v')}{t_i}.\end{equation}
We refer to $\gamma(v,v')$ as the \emph{butterfly degree} of the pair $(v,v')$.
We can think of the butterfly degree as counting (weighted) diagrams of the form in Figure~\ref{fig:butterfly}, where a diagram in partition set $i$ is given weight $1/t_i$.
Intuitively, $\gamma(v,v')$ will be high when $v$ and $v'$ have relatively high degrees within relatively small $T_i$.
The next lemma shows that the total of the butterfly degrees of all pairs is not too large.
\begin{lemma}\label{2xor_gammasum}
We have	\[\sum_{v,v' \in [n]} \gamma(v,v') = 4m\]
\end{lemma}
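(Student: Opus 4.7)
The plan is to prove the identity by a short double-counting calculation, swapping the order of summation and applying the handshake identity to each partition.

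First, I will substitute the definition \eqref{gamma_defn} of $\gamma(v,v')$ and swap the order of summation:
\[
\sum_{v,v' \in V} \gamma(v,v') = \sum_{v,v' \in V} \sum_{i=1}^\ell \frac{\deg_i(v)\deg_i(v')}{t_i} = \sum_{i=1}^\ell \frac{1}{t_i}\left(\sum_{v \in V}\deg_i(v)\right)^{\!2},
\]
where in the last step I factored the double sum over $(v,v')$ as the product of two copies of a single-variable sum.

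Next, I will use the handshake identity within each partition. Recall that $T_i$ is a multiset of edges on $V$ and $\deg_i(v)$ counts how many edges in $T_i$ are incident to $v$ (with multiplicity). Since each edge in $T_i$ contributes $2$ to $\sum_v \deg_i(v)$, we have $\sum_{v \in V} \deg_i(v) = 2 t_i$. Substituting this into the previous display gives
\[
\sum_{v,v' \in V} \gamma(v,v') = \sum_{i=1}^\ell \frac{(2 t_i)^2}{t_i} = \sum_{i=1}^\ell 4 t_i = 4\sum_{i=1}^\ell t_i = 4m,
\]
using $\sum_i t_i = |E| = m$ in the final equality.

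There is no real obstacle here; this is a routine interchange-of-summation argument. The only subtlety worth mentioning is that $T_i$ and $E$ are multisets (allowing repeated constraints), but both $\deg_i$ and $t_i$ are defined to count with multiplicity, so the handshake identity $\sum_v \deg_i(v) = 2t_i$ holds as stated and $\sum_i t_i = m$ follows directly from the definition $t_i = |T_i|$.
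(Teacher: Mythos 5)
Your proof is correct and follows exactly the same route as the paper's: swap the order of summation, factor the double sum as $\bigl(\sum_v \deg_i(v)\bigr)^2$, apply the handshake identity $\sum_v \deg_i(v) = 2t_i$, and use $\sum_i t_i = m$. The remark about multiset counting is a fine (if unnecessary) clarification; nothing is missing.
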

\begin{proof}
Indeed,
\begin{align*}
	\sum_{v,v' \in [n]} \gamma(v,v')
	&=  \sum_{i=1}^\ell \frac{1}{t_i} \sum_{v,v' \in V} \deg_i(v) \deg_i(v')
	=  \sum_{i=1}^\ell \frac{1}{t_i} \left(\sum_{v \in V} \deg_i(v)\right)^2 \\
	&=  \sum_{i=1}^\ell \frac{1}{t_i} \left(2t_i\right)^2
	= 4 m \ . \qquad  \qedhere
	\end{align*}
\end{proof}

We now define a partition $S_0, S_1, \dots, S_{\log n}$ of $V^2$ by breaking pairs $(v,v')$ into different ``butterfly degree classes''.
We define two numbers $\alpha, \beta \in \mathbb{R}$ as follows:
\begin{align}
\label{eqn:alpha-beta}
&\alpha = \frac{Cd^2 \ell (\log n)^6}{\varepsilon^4 m},
&&\beta = \left(\frac{4 m}{\alpha}\right)^{1/\log n}
\end{align}
Where $C$ is some large enough absolute constant, with the precise meaning of ``large enough'' to be determined later.

We define the sets $S_j$ as follows.
\begin{align*}
	S_0 &= \left\{(v,v') \in V^2 \mid \gamma(v,v') \leq \alpha\right\} \text{ and} \\
	S_j &= \left\{(v,v') \in V^2 \mid \alpha\beta^{j-1} \leq \gamma(v,v') \leq \alpha \beta^{j}\right\}\text{ for $j \geq 1$.}
\end{align*}
If we are dealing with a purely random instance in which each constraint is drawn at random, $\gamma(v,v) = \widetilde{O}(1)$ for each pair $v,v'$, and $S_0 = V^2$.
We can think of $S_0$ as containing all pairs which behave pseudorandomly, while $S_j$ for larger $j$ contain pairs whose edge mass is larger and more heavily correlated.

In order for this partition to be legitimate, we need to establish that $\beta \geq 1$ and that the sets $\{S_j\}$ actually include all elements of $V^2$.
We will show now that $\beta$ is bounded on both sides by absolute constants.
\begin{lemma}\label{2xor_betasmall} The parameter $\beta$ defined in \eqref{eqn:alpha-beta} satisfies
	\[1 \leq \beta = O(1).\]
\end{lemma}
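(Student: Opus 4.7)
The plan is straightforward: both bounds follow from plugging in the definitions of $\alpha$ and $\beta$ from \eqref{eqn:alpha-beta} and using the hypotheses of \cref{2xor_bounded_partitioned} (namely the lower bound on $m$ and the polynomial upper bound $\ell = O(\mathrm{poly}(n))$). The key identity to keep in mind throughout is
\[
\frac{4m}{\alpha} \;=\; \frac{4 \varepsilon^{4} m^{2}}{C d^{2} \ell (\log n)^{6}}\mper
\]
Both claimed bounds on $\beta$ reduce to controlling this quantity.

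For the lower bound $\beta \geq 1$, I would observe that it is equivalent to $4m/\alpha \geq 1$, i.e.\ $4 \varepsilon^{4} m^{2} \geq C d^{2} \ell (\log n)^{6}$. This follows immediately from the assumption
\[
m \;\geq\; \frac{Kd n \sqrt{\ell}(\log n)^{3}}{\varepsilon^{2}}
\]
in the statement of \cref{2xor_bounded_partitioned}, since squaring gives $m^{2} \geq K^{2} d^{2} n^{2} \ell (\log n)^{6}/\varepsilon^{4}$, and $n^{2} \geq 1$ lets us absorb the constants provided $K$ is taken at least (say) $\sqrt{C}/2$. Raising to the $1/\log n$ power preserves the inequality, giving $\beta \geq 1$.

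For the upper bound $\beta = O(1)$, I would use the trivial upper bound $m \leq \ell \binom{n}{2} \leq \ell n^{2}$ together with the hypothesis $\ell = O(\mathrm{poly}(n))$ to conclude that $4m/\alpha$ is bounded above by some fixed polynomial in $n$, say $n^{c}$. Then
\[
\beta \;=\; \Paren{\tfrac{4m}{\alpha}}^{1/\log n} \;\leq\; \paren{n^{c}}^{1/\log n} \;=\; 2^{c}\mcom
\]
which is an absolute constant. (The factor $1/(Cd^{2}(\log n)^{6})$ in $4m/\alpha$ only helps, so it can be discarded.)

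There is no real obstacle here; the only subtlety is keeping the constants $C$ and $K$ coordinated so that both inequalities hold simultaneously, and ensuring that the parametric assumption $\ell = O(\mathrm{poly}(n))$ is invoked explicitly so that $\log(4m/\alpha) = O(\log n)$. Both checks are quantitatively loose, which suggests the lemma is stated more for bookkeeping in the subsequent partition argument than as a tight technical claim.
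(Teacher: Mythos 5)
Your overall route is the same as the paper's: plug the definition of $\alpha$ into $4m/\alpha$, get the lower bound $\beta\ge 1$ from the hypothesis on $m$, and get $\beta=O(1)$ by showing $4m/\alpha = O(\poly(n))$ and raising to the power $1/\log n$. Your lower-bound step is fine and in fact a bit cleaner than the paper's: you square the branch $m \ge Kdn\sqrt{\ell}(\log n)^3/\varepsilon^2$ directly (needing only $K\gtrsim\sqrt{C}$, which is consistent with the paper's quantifier order, since $C$ is fixed first and $K$ may then be taken large), whereas the paper uses the other branch $\ell\le\varepsilon^2 m$ together with the assumption $d=O(\poly(\varepsilon^{-1},\log n))$.

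The one step that does not survive scrutiny is the ``trivial'' bound $m\le \ell\binom{n}{2}\le \ell n^2$ in your upper-bound argument. Here $E$ is a \emph{multiset} and repeated constraints are explicitly allowed (the paper stresses this is needed for the reduction from general CSPs), so the number of constraints in a single part is not bounded by $\binom{n}{2}$. The correct way to bound $m$, and the one the paper uses, is via the $d$-boundedness hypothesis: $\deg_i(v)\le d$ for all $i,v$ gives $t_i\le dn$ and hence $m\le \ell d n$. The repair costs nothing in your computation, because the $d^2$ appearing in the numerator of $\alpha$ exactly absorbs the extra factor: $4m/\alpha = 4\varepsilon^4 m^2/(Cd^2\ell(\log n)^6) \le 4\ell n^2/C = O(\poly(n))$ using $\ell=O(\poly(n))$, after which your $\paren{n^c}^{1/\log n}=2^c$ conclusion goes through as stated. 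Also note that discarding the factor $1/d^2$ (as you suggest) would be a mistake once $m\le \ell dn$ is used, since $d$ is only assumed $O(\poly(\varepsilon^{-1},\log n))$ and $\varepsilon$ need not be a fixed constant; keeping the $d^2$ cancellation avoids any such dependence.
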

\begin{proof}
	We first establish the upper bound.
	Since each of the $n$ elements of $V$ is in at most $d$ pairs in each of the $\ell$ partitions, we know $m \leq \ell d n = O(\poly(n))$.
	Then expanding $\alpha$ and $\beta$ we have
	\[\beta =
	\left(\frac{4\varepsilon^4 m^2}{Cd^6\ell(\log n)^6}\right)^{1/\log n} \leq O\left( \ell n^2\right)^{1/\log n}=
	O(\poly(n))^{1/\log n} = O(1).\]

	Next we establish the lower bound of 1.
	\[
	\beta =
	\left(\frac{4\varepsilon^4 m^2}{Cd^6\ell(\log n)^6}\right)^{1/\log n}
	\geq \left(\frac{\varepsilon^2 m}{Cd^6(\log n)^6}\right)^{1/\log n} \geq 1^{1/\log n} = 1.
	\]
	where the first inequality holds since we assumed $\ell \leq \varepsilon^2 m$ via our lower bound on $m$, and the second inequality holds since $d = O(\poly(\varepsilon^{-1}, \log n))$.
\end{proof}

By Lemma~\ref{2xor_gammasum} and nonnegativity of $\gamma$, we know $\gamma(v,v') \leq 4 m = \alpha\beta^{\log n}$ always, so the $S_j$ do indeed cover all possible cases for $\gamma(v, v')$ and thus form a partition.

\begin{lemma}\label{2xor_sisize}
	$|S_0| \leq n^2$, and for $j \geq 1$,
	\[|S_j| \leq \frac{4m}{\alpha \beta^{j-1}}.\]
\end{lemma}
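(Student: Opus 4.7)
The proof is a straightforward application of Markov's inequality (i.e., a counting argument) using the bound on the total butterfly degree established in Lemma~\ref{2xor_gammasum}.

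First, for $S_0$, the bound $|S_0| \leq n^2$ is immediate since $S_0 \subseteq V^2$ and $|V| = n$.

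For $j \geq 1$, the plan is to observe that every pair $(v,v') \in S_j$ satisfies $\gamma(v,v') \geq \alpha \beta^{j-1}$ by definition. Combined with the nonnegativity of $\gamma$, this gives the chain of inequalities
\[ |S_j| \cdot \alpha \beta^{j-1} \;\leq\; \sum_{(v,v') \in S_j} \gamma(v,v') \;\leq\; \sum_{v,v' \in V} \gamma(v,v') \;=\; 4m, \]
where the final equality is Lemma~\ref{2xor_gammasum}. Dividing through by $\alpha \beta^{j-1}$ (which is positive since $\alpha > 0$ and $\beta \geq 1$ by Lemma~\ref{2xor_betasmall}) yields the claimed bound $|S_j| \leq 4m/(\alpha \beta^{j-1})$.

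There is no real obstacle here; the content of the lemma is just a Markov-style argument enabled by the global mass bound from Lemma~\ref{2xor_gammasum}. The only thing one has to be a bit careful about is confirming that $\alpha \beta^{j-1} > 0$ so that the division is valid, which is immediate from the definitions of $\alpha$ and $\beta$ together with Lemma~\ref{2xor_betasmall}.
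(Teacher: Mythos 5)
Your proof is correct and is essentially identical to the paper's: both bound $|S_0|$ trivially by $S_0 \subseteq V^2$ and apply the same Markov-style counting argument, $|S_j|\,\alpha\beta^{j-1} \leq \sum_{(v,v')\in S_j}\gamma(v,v') \leq \sum_{v,v'\in V}\gamma(v,v') = 4m$, via Lemma~\ref{2xor_gammasum}. The extra remark about positivity of $\alpha\beta^{j-1}$ is a harmless addition.
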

\begin{proof}
	The first bound follows as $S_0 \subseteq V^2$.
	For the second bound see that
	\[|S_j|\alpha \beta^{j-1} \leq \sum_{(v,v') \in S_j} \gamma(v,v') \leq \sum_{v,v' \in V} \gamma(v,v') = 4m\]
	where the last inequality follows from Lemma~\ref{2xor_gammasum}.
\end{proof}

The main idea of our refutation is as follows. A pair $(v,v')$ of high butterfly degree will bump up the value of $\Phi_1$.
However, Lemma~\ref{2xor_sisize} shows that there cannot be too many pairs contributing too large of a value.
To make this more precise, we will break $M$ down into a sum of $(1 + \log n)^2$ blocks $M_{j,k}$ indexed by $0 \leq j,k \leq \log n$.
Define $M_{j,k}$ as just the restriction of $M$ to rows in $S_j$ and columns in $S_k$.
Then clearly $x^\top M y = \sum_{j,k} x_{|S_j}^\top M_{j,k} y_{|S_k}$, so
\[\pmn{M} \leq \sum_{0 \leq j,k \leq \log n} \pmn{M_{j,k}}.\]
We will bound each $\pmn{M_{j,k}}$ individually via the inequality
\[\pmn{M_{j,k}} = \max_{x,y \text{Boolean}} x^\top M y \leq \sqrt{|S_j|\cdot |S_k|}\cdot\|M_{j,k}\|.\]

Intuitively, a similar bound will not work on $\pmn{M}$ directly because much of its mass can be unevenly distributed, causing the eigenvector associated to its maximal eigenvalue to look very far from a Boolean vector which results in a loose bound.
However, we might expect a spectral bound on the $M_{j,k}$ will work since by restricting to rows/columns in certain butterfly degree classes $S_j$ and $S_k$, we cause the vectors maximizing $x^\top M y$ to more closely resemble Boolean vectors.

As Lemma~\ref{2xor_sisize} gives a bound on the size of the sets $S_j$ and $S_k$, what remains is to bound $\|M_{j,k}\|$. 

\begin{lemma}\label{2xor_mnormsmall}
	For all $0 \leq j,k \leq \log n$, we have
	\[\|M_{j,k}\| = O\left(d\sqrt{\alpha \beta^{\max(j,k)}} (\log n)\right).\]
	with high probability.
\end{lemma}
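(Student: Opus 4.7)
The plan is to apply the rectangular matrix Bernstein inequality (Theorem~\ref{mbernstein_rect}) to $M_{j,k}$ by decomposing it as a sum of independent mean-zero matrix-valued random variables indexed by the $\ell$ partitions. For each $i \in [\ell]$ let $B_i$ be the symmetric $V \times V$ matrix with $B_i(v,u) = \mu_i(v,u)$; then up to a small deterministic correction handling the ``subtraction diagonal'' in the definition of $M$, we have $M = \sum_i \frac{1}{\sqrt{t_i}}(B_i \otimes B_i)-D$. Setting $Z_i := \frac{1}{\sqrt{t_i}}(B_i \otimes B_i - \mathbb{E}[B_i \otimes B_i])$ gives independent mean-zero summands, and the residual deterministic matrix $F := -D + \sum_i \mathbb{E}[B_i \otimes B_i]/\sqrt{t_i}$ is supported on ``diagonal-type'' positions (the subtraction diagonal and the ``swap diagonal'' $(v,u)=(u',v')$), so its spectral norm can be controlled entry-wise using the degree bound $d$ and absorbed as a lower-order term.

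For the variance parameter, let $P_j, P_k$ be the coordinate projections onto $S_j, S_k$ and consider $Z_i^{(j,k)} := P_j Z_i P_k$. Using symmetry of $B_i$ and the identity that $\mathbb{E}[\mu_i(v,u)^2]$ equals the multiplicity of $\{v,u\}$ in $T_i$ (which is at most $d$), a direct expansion shows $\sum_i \mathbb{E}[Z_i Z_i^\top]$ is essentially diagonal on $V^2$ with $(v,v')$-entry $\sum_i \deg_i(v)\deg_i(v')/t_i = \gamma(v,v')$, up to small swap-symmetry pieces absorbable into $F$. Restricting to $S_j$ and using the defining bound $\gamma \leq \alpha\beta^j$ yields $\|\sum_i \mathbb{E}[Z_i^{(j,k)}(Z_i^{(j,k)})^\top]\| \leq \alpha\beta^j$, and symmetrically $\|\sum_i \mathbb{E}[(Z_i^{(j,k)})^\top Z_i^{(j,k)}]\| \leq \alpha\beta^k$. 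Hence $\sigma^2 \leq \alpha\beta^{\max(j,k)}$, and the first Bernstein term is $O(\sqrt{\sigma^2 \log n}) = O(\sqrt{\alpha\beta^{\max(j,k)}\log n}) \leq O(d\sqrt{\alpha\beta^{\max(j,k)}}\log n)$, comfortably within the target.

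The main obstacle will be the maximum-summand term $R \log n$. Lemma~\ref{gersh_rect} gives $\|B_i\| \leq d$ (row $\ell_1$ norms are bounded by $\deg_i(v) \leq d$), so $\|Z_i^{(j,k)}\| \leq 2d^2/\sqrt{t_i}$, and combined with the Frobenius bound $\|B_i\|_F^2 \leq 2dt_i$ we get $\|Z_i\| = O(d^{3/2})$ uniformly (attained near $t_i \approx d$). For partitions with small $t_i$ this can exceed the desired bound $d\sqrt{\alpha\beta^{\max(j,k)}}$. I would handle this by splitting the sum according to whether $t_i$ exceeds a threshold $T^* \asymp d^2/(\alpha\beta^{\max(j,k)})$: partitions with $t_i \geq T^*$ feed into Theorem~\ref{mbernstein_rect} with an effective $R$ satisfying $R\log n \leq O(d\sqrt{\alpha\beta^{\max(j,k)}}\log n)$, while partitions with $t_i < T^*$ contribute a matrix whose spectral norm is bounded directly by using the Frobenius-based estimate $\|Z_i\| \leq 2d\sqrt{t_i}$ together with $\sum_i t_i = m$ and the hypothesis $m \geq Kdn\sqrt{\ell}(\log n)^3/\varepsilon^2$, which control the total small-$t_i$ mass. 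Combining the Bernstein tail bound for the large-$t_i$ piece with the deterministic bound on the small-$t_i$ piece and on $\|F\|$ yields $\|M_{j,k}\| = O(d\sqrt{\alpha\beta^{\max(j,k)}}\log n)$ with high probability, as claimed.
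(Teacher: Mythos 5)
Your overall architecture---decomposing $M_{j,k}$ into the $\ell$ independent per-partition summands, verifying that the variance matrix is essentially supported on pairs with $\{v,v'\}=\{u,u'\}$ with entries bounded by $\gamma(v,v')\le\alpha\beta^{\max(j,k)}$, and then invoking Theorem~\ref{mbernstein_rect}---is the same as the paper's, and your treatment of the mean/swap-diagonal correction $F$ is a reasonable way to handle a wrinkle the paper relegates to a footnote. The genuine gap is in the uniform summand bound $R$. Your bounds $\|Z_i\|\le 2d^2/\sqrt{t_i}$ and $\|Z_i\|\le 2d\sqrt{t_i}$ never use the restriction to rows in $S_j$ and columns in $S_k$, and the proposed repair---splitting at $T^*\asymp d^2/(\alpha\beta^{\max(j,k)})$ and bounding the small-$t_i$ contribution deterministically from the Frobenius estimate together with $\sum_i t_i=m$---does not go through. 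Deterministically the only generic way to combine per-$i$ norms is the triangle inequality, which yields at best roughly $\sum_{t_i<T^*}2d\min\bigl(\sqrt{t_i},\,d/\sqrt{t_i}\bigr)\lesssim d\,\ell\sqrt{T^*}$ (or $d\sqrt{\ell m}$ via Cauchy--Schwarz); in the intended regime (e.g.\ the $3$-XOR application, where $\ell=n$, $m=\widetilde{\Theta}(n^{3/2})$, $d=\Theta(\varepsilon^{-2})$, and every $t_i\approx m/\ell$ can sit below the threshold) this is larger than the target $d\sqrt{\alpha\beta^{\max(j,k)}}\log n$ by polynomial factors in $n$, since there $\alpha=o(1)$ and the target is itself $o(1)$. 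Nothing in the hypothesis $m\ge Kdn\sqrt{\ell}(\log n)^3/\varepsilon^2$ prevents \emph{all} partitions from being ``small,'' so the small-$t_i$ piece cannot be absorbed this way.

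The fix---and the paper's actual argument (Lemma~\ref{2xor_bibound})---is that no case split on $t_i$ is needed, because the butterfly-degree class membership must be used in the \emph{single-summand} bound, not only in the variance. After restricting to $S_j\times S_k$, the row of the $i$-th summand indexed by $(v,v')\in S_j$ has $\ell_1$ norm at most $\deg_i(v)\deg_i(v')/\sqrt{t_i}$, and since $\deg_i(v)\deg_i(v')\le d^2$ one gets $\deg_i(v)\deg_i(v')/\sqrt{t_i}\le d\sqrt{\deg_i(v)\deg_i(v')/t_i}\le d\sqrt{\gamma(v,v')}\le d\sqrt{\alpha\beta^{j}}$, and symmetrically $d\sqrt{\alpha\beta^{k}}$ for columns; Lemma~\ref{gersh_rect} then gives $\|B_{i,j,k}\|\le d\sqrt{\alpha\beta^{\max(j,k)}}$ uniformly in $i$, regardless of how small $t_i$ is. With $R=d\sqrt{\alpha\beta^{\max(j,k)}}$ and $\sigma^2\le 2\alpha\beta^{\max(j,k)}$ (your variance computation, which matches Lemma~\ref{2xor_variancebound}), a single application of Theorem~\ref{mbernstein_rect} yields the claimed $O\bigl(d\sqrt{\alpha\beta^{\max(j,k)}}\log n\bigr)$ bound, and your large-$t_i$/small-$t_i$ dichotomy disappears.
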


The key ingredient in the proof of Lemma~\ref{2xor_mnormsmall} is the rectangular matrix Bernstein inequality, stated earlier as Theorem~\ref{mbernstein_rect}.
To apply it, we will write $M_{j,k}$ as a sum of independent, zero-mean matrices $B_{i,j,k}$ for $1 \leq i \leq \ell$, and provide upper bounds on both $\|B_{i,j,k}\|$ and the variance term $\sigma^2$ from the statement of the rectangular matrix Bernstein inequality.
We begin by defining the matrix $B_{i,j,k} \in \mathbb{R}^{S_j \times S_k}$ as
\begin{equation}
	B_{i,j,k}(v,v';u,u') =
	\begin{cases} \frac{1}{\sqrt{t_i}}\mu_i(v,u)\mu_i(v',u') &\text{when $(v,u) \neq (v',u')$, and} \\
	0 &\text{otherwise.}
	\end{cases}
\end{equation}
It follows from definitions that $\sum_{i=1}^\ell B_{i,j,k} = M_{j,k}$.
We now establish a uniform bound on the spectral norm of $B_{i,j,k}$.
\begin{lemma}\label{2xor_bibound}
	For all $0 \leq j,k \leq \log n$ and $1 \leq i \leq n$,
	\[\|B_{i,j,k}\| \leq d \sqrt{\alpha \beta^{\max(j,k)}}\]
\end{lemma}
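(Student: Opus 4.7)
The plan is to prove this lemma purely deterministically via a clean application of Lemma~\ref{gersh_rect} (the row/column sum bound), combined with a geometric-mean trick that trades off two complementary upper bounds on the entries of $B_{i,j,k}$. No matrix concentration is needed at this step, since we are bounding the spectral norm of a single summand rather than the sum.

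First I would unpack the row $\ell_1$-sum of $B_{i,j,k}$. Fix $(v,v') \in S_j$. Since zeroing out the diagonal $(v,u)=(v',u')$ only reduces the absolute sum, we have
\[
\sum_{(u,u') \in S_k} \bigl|B_{i,j,k}(v,v';u,u')\bigr|
\;\le\; \frac{1}{\sqrt{t_i}} \sum_{u,u' \in V} \bigl|\mu_i(v,u)\bigr|\,\bigl|\mu_i(v',u')\bigr|
\;=\; \frac{1}{\sqrt{t_i}} \Bigl(\sum_u |\mu_i(v,u)|\Bigr)\!\Bigl(\sum_{u'} |\mu_i(v',u')|\Bigr).
\]
By definition of $\mu_i$ as a signed sum over constraint copies, $\sum_u |\mu_i(v,u)| \le \deg_i(v)$ and similarly for $v'$, so the row sum is at most $\deg_i(v)\deg_i(v')/\sqrt{t_i}$.

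Next I would bound $\deg_i(v)\deg_i(v')$ in two ways. On the one hand, the degree-boundedness hypothesis gives $\deg_i(v)\deg_i(v') \le d^2$, so the row sum is at most $d^2/\sqrt{t_i}$. On the other hand, every term in the definition \eqref{gamma_defn} of $\gamma(v,v')$ is nonnegative, so
\[
\frac{\deg_i(v)\deg_i(v')}{t_i} \;\le\; \gamma(v,v') \;\le\; \alpha\beta^j,
\]
where the last inequality uses $(v,v') \in S_j$ (valid for $j=0$ too since $\beta^0=1$). This yields the alternate row-sum bound $\sqrt{t_i}\,\alpha\beta^j$. Taking the geometric mean of the two bounds (i.e.\ using $\min(A,B) \le \sqrt{AB}$):
\[
\text{row sum} \;\le\; \sqrt{\bigl(d^2/\sqrt{t_i}\bigr)\cdot\bigl(\sqrt{t_i}\,\alpha\beta^j\bigr)} \;=\; d\sqrt{\alpha\beta^j}.
\]

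The column $\ell_1$-sum is bounded by the same argument with the roles of $(v,v')$ and $(u,u')$ swapped and $S_k$ in place of $S_j$, yielding at most $d\sqrt{\alpha\beta^k}$. Applying Lemma~\ref{gersh_rect} then gives
\[
\|B_{i,j,k}\| \;\le\; \max\!\bigl\{d\sqrt{\alpha\beta^j},\, d\sqrt{\alpha\beta^k}\bigr\} \;=\; d\sqrt{\alpha\beta^{\max(j,k)}},
\]
as required. The only mildly subtle step is noticing the geometric-mean trade-off between the ``worst case'' bound $d^2/\sqrt{t_i}$ (useful when $t_i$ is large) and the ``butterfly'' bound $\sqrt{t_i}\,\alpha\beta^j$ (useful when $t_i$ is small); the two meet exactly at the bound claimed by the lemma. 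Everything else is bookkeeping, and no probabilistic input is used in this lemma itself.
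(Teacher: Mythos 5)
Your proposal is correct and follows essentially the same route as the paper: bound the row/column $\ell_1$ sums of $B_{i,j,k}$ by $\deg_i(v)\deg_i(v')/\sqrt{t_i}$, combine the $d$-boundedness with the fact that $\deg_i(v)\deg_i(v')/t_i \le \gamma(v,v') \le \alpha\beta^j$ on $S_j$, and conclude via Lemma~\ref{gersh_rect}. Your geometric-mean step $\min(A,B)\le\sqrt{AB}$ is just a repackaging of the paper's multiplication by $d/\sqrt{w_i}\ge 1$, so the two arguments are the same computation.
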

\begin{proof}
    Fix $v,v' \in V$.
    Let $w_i = \deg_i(v) \deg_i (v')$.
    The $\ell_1$ norm of the row corresponding to $(v,v')$ in $B_{i,j,k}$ is at most $\frac{w_i}{\sqrt{t_i}}$, since each pair of edges $e,e' \in T_i$ with $v \in e$ and $v' \in e'$ contributes at most $\frac{1}{\sqrt{t_i}}$ to it, by the definition of $B_{i,j,k}$.
    Since the instance is $d$-bounded we know $w_i \leq d^2$, so $\frac{d}{\sqrt{w_i}} \geq 1$
    Thus
    \[\frac{w_i}{\sqrt{t_i}} \leq \left(\frac{d}{\sqrt{w_i}}\right) \cdot \frac{w_i}{\sqrt{t_i}} = d\sqrt{w_i/t} \leq d\sqrt{\sum_{i=1}^\ell w_i/t_i} = d\sqrt{\gamma(v,v')} \leq d\sqrt{\alpha \beta^j}\]
    where the last inequality follows since $(v,v') \in S_j$.
    This shows the $\ell_1$ norm of rows of $B_{i,j,k}$ are bounded by $d\sqrt{\alpha \beta^j}$, and an analogous argument works to show that the max $\ell_1$ norm of any column is bounded by $d\sqrt{\alpha \beta^k}$.
    Then by Lemma~\ref{gersh_rect} we are done.
\end{proof}

All that remains is to bound the variance term in the statement of Theorem~\ref{mbernstein_rect}.
We prove the following lemma.
\begin{lemma}\label{2xor_variancebound}
	For $0 \leq j, k \leq \log n$, we have
	\[\max\left\{\left\|\sum_{i=1}^\ell \E[B_{i,j,k} B_{i,j,k}^\top]\right\|,\left\|\sum_{i=1}^\ell \E[B_{i,j,k}^\top B_{i,j,k}]\right\| \right\}\leq 2 \alpha \beta^{\max(j,k)}\]
\end{lemma}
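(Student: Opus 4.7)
The plan is to bound the variance matrix $D_{j,k} := \sum_i \E[B_{i,j,k} B_{i,j,k}^\top]$ (and its analogue $\sum_i \E[B_{i,j,k}^\top B_{i,j,k}]$) entry-wise, using the independence of the underlying Rademacher signs $r(c)$, and then apply the row-sum bound of Lemma~\ref{gersh_rect}.

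First I would write out an entry of $D_{j,k}$ as
\[D_{j,k}((v,v'),(w,w')) = \sum_i \frac{1}{t_i}\sum_{(u,u')\in S_k}\E\bigl[\mu_i(\{v,u\})\mu_i(\{v',u'\})\mu_i(\{w,u\})\mu_i(\{w',u'\})\bigr].\]
Since $\mu_i(e)$ is a sum of independent Rademacher signs indexed by copies of $(i,e)$ in $E$, and signs on distinct edges are independent, the expectation above is nonzero only when the four edges $e_1=\{v,u\}$, $e_2=\{v',u'\}$, $e_3=\{w,u\}$, $e_4=\{w',u'\}$ can be paired up so that each distinct edge appears an even number of times. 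I would bound the expectation by the sum over the three Wick-style pairings $\{e_1{=}e_2,e_3{=}e_4\}$, $\{e_1{=}e_3,e_2{=}e_4\}$, $\{e_1{=}e_4,e_2{=}e_3\}$ of the product $d_i(e_a)d_i(e_b)$ on the pairs (using $\E[\mu_i(e)^2]=d_i(e)$ and $|\E[\mu_i(e)\mu_i(e')]|=0$ for $e\neq e'$; this slightly over-counts in degenerate cases such as $e_1=\cdots=e_4$ but is a valid upper bound).

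A short case analysis of the resulting equations in $v,v',w,w',u,u'$ then shows that nonzero entries of $D_{j,k}$ are supported only on the diagonal $(w,w')=(v,v')$ (from the ``separable'' pairing $e_1=e_3$, $e_2=e_4$, which forces $v=w$ and $v'=w'$, plus degenerate cases of the other two pairings that force the same equality) and on the ``swap'' positions $(w,w')=(v',v)$ (from the pairing $e_1=e_4$, $e_2=e_3$ combined with $u=u'$). For the diagonal entry I would enlarge the $(u,u')$-sum to all of $V^2$ to get exactly
\[\sum_i \frac{1}{t_i}\deg_i(v)\deg_i(v') \;=\; \gamma(v,v') \;\le\; \alpha\beta^j,\]
where the last inequality is by $(v,v')\in S_j$. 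For the swap entry the contribution is $\sum_i \frac{1}{t_i}\sum_u d_i(\{v,u\})d_i(\{v',u\})$, and the key inequality $\sum_u d_i(\{v,u\})d_i(\{v',u\})\le \deg_i(v)\deg_i(v')$ (which holds because $\max_u d_i(\{v',u\})\le \deg_i(v')$ while $\sum_u d_i(\{v,u\})=\deg_i(v)$) again yields a bound of $\gamma(v,v')\le \alpha\beta^j$. Applying Lemma~\ref{gersh_rect} to $D_{j,k}$ (whose every row has at most a diagonal and a swap nonzero) then gives $\|D_{j,k}\|\le 2\alpha\beta^j$ after careful accounting of constants.

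The argument for $\sum_i \E[B_{i,j,k}^\top B_{i,j,k}]$ is entirely symmetric, only that rows and columns now live in $S_k$, so the same steps yield the bound $2\alpha\beta^k$. Taking the maximum of the two produces $2\alpha\beta^{\max(j,k)}$, as claimed. The main obstacle is the case analysis: verifying that \emph{all} nonzero contributions concentrate on the diagonal and swap positions, and that the lower-order degenerate contributions (all four edges coincident, or $u=v'$ and $u'=v$) do not inflate the constant beyond~$2$. This is combinatorial bookkeeping rather than a conceptual obstruction, but needs to be done carefully since the pairings interact when multiple of them hold simultaneously.
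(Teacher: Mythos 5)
Your proposal matches the paper's proof in all essentials: expand the entries of $\sum_i \E[B_{i,j,k}B_{i,j,k}^\top]$, use the Rademacher pairing structure of the $\mu_i$'s to show the nonzero entries sit only at the diagonal and swap positions, bound each such entry by $\gamma(v,v')\le\alpha\beta^j$, and finish with the row/column $\ell_1$ bound of Lemma~\ref{gersh_rect}, handling the transpose term symmetrically (the paper uses $B_{i,j,k}=B_{i,k,j}^\top$). The only cosmetic difference is your slightly more explicit bound on the swap entry via $\sum_u d_i(\{v,u\})d_i(\{v',u\})\le\deg_i(v)\deg_i(v')$, where the paper enlarges the inner sum to $V^2$; both yield the same $\gamma(v,v')$ bound.
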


\begin{proof}
	Let $X_{j,k} = \sum_{i=1}^\ell \E[B_{i,j,k}B_{i,j,k}^\top]$.
	Because $B_{i,j,k} = B_{i,k,j}^\top$, the value we would like to bound is $\max\{\|X_{j,k}\|,\|X_{k,j}\|\}$.
	Then it suffices to show that $\|X_{j,k}\| \leq 2\alpha \beta^j$.

	We will first show that $X_{j,k}(v,v';u,u') \neq 0$ implies $\{v,v'\} = \{u, u'\}$.
	We expand the definition to find
	\begin{align*}
		X_{j,k}(v,v';u,u')
		&= \sum_{i = 1}^\ell\sum_{w,w'} \E[B_{i,j,k}(v,v';w,w')B_{i,j,k}(u,u';w,w')] \\
		&= \sum_{i = 1}^\ell \frac{1}{t_i} \cdot \sum_{\substack{(w,w') \in S_k \\ (v,w) \neq (v',w') \\ (u,w) \neq (u',w')}}\E[\mu_i(v,w)\mu_i(v',w')\mu_i(u,w)\mu_i(u',w')]
	\end{align*}
	For any $e,e' \in T_i$ either $\mu_i(e)$ and $\mu_i(e')$ are identical or $\mu_i(e)$ and $\mu_i(e')$ are independent with zero-mean.
	A term is nonzero precisely when the four $\mu(-)$ factors can be partitioned into two equal pairs.
	We cannot pair factors 1 and 2 since $\{v,w\} \neq \{v',w'\}$ by the condition on indices we are summing over.

	Suppose factors 1 and 3 are equal, and additionally factors 2 and 4 are equal. 
	Then we can see directly that $v = u$ and $v' = u'$.

	Suppose factors 1 and 4 are equal, and additionally factors 2 and 3 are equal.
	Then either (1) $v = u'$ and $w = w'$, or (2) $v = w'$ and $w = u'$.
	By looking at factors 2 and 3, case (1) implies $v' = u$, meaning $(v,v') = (u',u)$.
	Again looking at factors 2 and 3, case (2) implies $\{v,v'\} = \{u,u'\}$.

	Thus by considering all cases, we have determined that $X_{j,k}(v,v';u,u')$ is only ever nonzero when $\{v,v'\} = \{u,u'\}$.
	Let $D_i(v,w)$ be the number of copies of edge $(v,w)$ in partition set $i$.
	Then $\E[\mu_i(v,w)^2] = D_i(v,w)$ since it is a sum of $D_i(v,w)$ independent Rademacher random variables.
	In the case $v = u$ and $v' = u'$, we see
	\begin{align*}
	X_{j,k}(v,v';v,v')
	&= \ \sum_{i=1}^\ell \frac{1}{t_i} \cdot \sum_{\substack{(w,w') \in S_k \\ (v,w) \neq (v',w')}} \E[\mu_i(v,w)^2 \mu_i(v',w')^2] \\
	&= \ \sum_{i=1}^\ell \frac{1}{t_i} \cdot \sum_{\substack{(w,w') \in S_k \\ (v,w) \neq (v',w')}} D_i(v,w)D_i(v',w') \\
	&\leq  \ \sum_{i=1}^\ell \frac{\deg_i(v) \deg_i(v')}{t_i} \\
	&=  \ \gamma(v,v') \leq \alpha \beta^j
	\end{align*}
\noindent	where the last inequality follows since $(v,v') \in S_j$.
	When $v = u'$ and $v' = u$, we see 
	\begin{align*}
		X_{j,k}(v,v';v',v)
		&= \ \sum_{i = 1}^\ell \frac{1}{t_i} \cdot \sum_{\substack{(w,w') \in S_k \\ (v,w) \neq (v',w') \\ (v',w) \neq (v,w')}}\E[\mu_i(v,w)\mu_i(v',w')\mu_i(v',w)\mu_i(v,w')]. \\
\intertext{
    By inspecting the four factors in the expectation, and considering the two cases where they can be broken into two equal pairs, we see that the above is at most
}
		&\leq \ \sum_{i = 1}^\ell \frac{1}{t_i} \cdot \sum_{(w,w') \in V^2}\E[\mu_i(v,w)^2 \mu_i(v',w')^2] \\
		&= \gamma(v,v') \leq \alpha \beta^j
	\end{align*}
	where the last equality is because $(v,v') \in S_j$.
	Then each row and column of $X_{j,k}$ has at most two nonzero entries, whose total sum is bounded by $2\gamma(v,v') \leq 2\alpha \beta^j$.
	Then Lemma~\ref{gersh_rect} applies to show $\|X_{j,k}\| \leq 2\alpha \beta^j$, completing the proof.
\end{proof}

We now have enough information to apply the rectangular matrix Bernstein inequality to the decomposition $M_{j,k} = \sum_{i=1}^\ell B_{i,j,k}$.
By Theorem~\ref{mbernstein_rect}, Lemma~\ref{2xor_bibound}, and Lemma~\ref{2xor_variancebound}, we can set some $t = \Theta\left(d \sqrt{\alpha \beta^{\max(i,j)}} (\log n)\right)$ so that 
\begin{align*}
	\operatorname{P}\left[\|M_{j,k}\| \geq t\right]
	&\leq (|S_j| + |S_k|)\exp\left(\frac{-t^2/2}{2\alpha \beta^{\max(j,k)}+d\sqrt{\alpha \beta^{\max(j,k)}} t/3}\right) \\
	&\leq n2^{-\Theta(\log n)} \\
	& \leq \frac{1}{p(n)}
\end{align*}
for an arbitrarily high degree polynomial $p$.
This completes the proof of Lemma~\ref{2xor_mnormsmall}.
Since there are only $(1+\log n)^2$ different $M_{j,k}$ the probability that $\|M_{j,k}\| = O(\alpha \beta^{\max(i,j)})$ for all of them is at least $1-\frac{1}{p(n)}$ for a polynomial $p$ of arbitrarily high degree.
We now show that in the highly probable case that this bound holds for all $M_{j,k}$, $\Phi_2(x) \leq \frac{\varepsilon^2 m^{3/2}}{\sqrt{\ell}}$ for all $x \in \{\pm 1\}^n$.

We have the inequality
\begin{align}
\pmn{M} &= \sum_{0 \leq j,k \leq \log n} \pmn{M_{j,k}} \leq \max_{0 \leq j,k \leq \log n} 2  \pmn{M_{j,k}}(\log n)^2 \\
&\leq \max_{0 \leq j,k \leq \log n} 2 \sqrt{|S_j| \cdot |S_k|} \cdot \|M_{j,k}\|(\log n)^2. \label{m_bound}
\end{align}

We show that, no matter the choice of $j$ and $k$, the above bound is sufficiently small.
Since our upper bound on $|S_j|$ from Lemma~\ref{2xor_sisize} depends on whether $j = 0$ or not, we break into three cases.
We only consider the case when $j \leq k$, since $\pmn{M_{j,k}} = \pmn{M_{k,j}}$.

When the right side of Inequality (\ref{m_bound}) is maximized at $j = k = 0$, we have
\begin{align*}
2\sqrt{|S_0| \cdot |S_0|}\cdot\|M_{0,0}\|(\log n)^2
&\leq 2n^2\|M_{0,0}\|(\log n)^2 \tag{by Lemma~\ref{2xor_sisize}} \\
&= O\left(dn^2\sqrt{\alpha}(\log n)^3\right). \tag{by Lemma~\ref{2xor_mnormsmall}} 
\end{align*}
When the right side of Inequality (\ref{m_bound}) is maximized at $j = 0$ and $k \geq 1$, we have
\begin{align*}
2\sqrt{|S_0| \cdot |S_k|}\cdot\|M_{0,k}\|(\log n)^2
&\leq 2n\sqrt{\frac{4m}{\alpha \beta^{k-1}}}\|M_{0,k}\|(\log n)^2 \tag{by Lemma~\ref{2xor_sisize}} \\
&= O\left(n\sqrt{\beta} d\sqrt{m}(\log n)^3\right) \tag{by Lemma~\ref{2xor_mnormsmall}} \\
&= O\left(dn\sqrt{m}(\log n)^3\right). \tag{by Lemma~\ref{2xor_betasmall}}
\end{align*}

When the right side of Inequality (\ref{m_bound}) is maximized at $1 \leq j \leq k$, we have
\begin{align*}
2\sqrt{|S_j| \cdot |S_k|}\cdot\|M_{j,k}\|(\log n)^2
&\leq \frac{8m}{\alpha \beta^{(j+k-2)/2}} \|M_{j,k}\| (\log n)^2 \tag{by Lemma~\ref{2xor_sisize}}\\
&= O\left(\frac{d m (\log n)^3}{\sqrt{\alpha}\beta^{(j-2)/2}}\right) \tag{by Lemma~\ref{2xor_mnormsmall}}\\
&= O\left(\frac{d m (\log n)^3}{\sqrt{\alpha}}\right). \tag{by Lemma~\ref{2xor_betasmall}}\\
\end{align*}

Recall that we defined $\alpha = \frac{Cd^2 \ell (\log n)^6}{\varepsilon^4 m}$ for some large enough absolute constant $C$.
The third upper bound can be forced below any fixed constant fraction of $\frac{\varepsilon^2 m^{3/2}}{\sqrt{\ell}}$ by increasing the constant $C$ in the definition of $R$.
Additionally, the first two upper bounds can be forced below any fixed constant fraction of $\frac{\varepsilon^2 m^{3/2}}{\sqrt{\ell}}$ by increasing the constant $K$ in the lower bound on $m$ from the statement of Theorem~\ref{2xor_bounded_partitioned}.
This completes the proof of Theorem~\ref{2xor_bounded_partitioned}.


\phantomsection
 \addcontentsline{toc}{section}{References}
 \bibliographystyle{amsalpha}
 \bibliography{bib/custom,bib/references,bib/scholar,bib/mathreview,bib/dblp,bib/witmer}

\appendix

    

\end{document}